\documentclass[runningheads,envcountsame]{llncs}

\newcommand{\longversion}[1]{#1}
\newcommand{\shortversion}[1]{}
\usepackage[utf8]{inputenc}

\usepackage[]{todonotes}
\usepackage{hyperref} 
\usepackage{mathtools} 
\usepackage{amsmath,amsfonts,amssymb} 

\usepackage[capitalize]{cleveref}
\usepackage{complexity}
\usepackage{xspace}

\usetikzlibrary{positioning, shapes,patterns,shadows.blur, arrows,decorations,arrows,automata,shadows,patterns,chains,graphs,calc,intersections,matrix,fit,shapes,chains,decorations.pathreplacing,chains,fit,shapes}

\DeclareMathOperator{\classical}{cl}
\DeclareSymbolFont{Shuffle}{U}{shuffle}{m}{n}
\DeclareFontFamily{U}{shuffle}{}
\DeclareFontShape{U}{shuffle}{m}{n}{%
  <-8>shuffle7%
  <8->shuffle10%
}{}
\DeclareMathSymbol\shuffle{\mathbin}{Shuffle}{"001}
\DeclareMathSymbol\cshuffle{\mathbin}{Shuffle}{"002}

\newcommand{\emptyword}{\lambda}

\def\ta{\mathtt{a}}
\def\tb{\mathtt{b}}
\def\tc{\mathtt{c}}
\def\td{\mathtt{d}}

\def\tn{\mathtt{n}}

\DeclareMathOperator{\alphabet}{alph}

\newcommand{\N}{\mathbb{N}}

\newcommand{\cG}{\mathcal{G}}

\newcommand{\cO}{\mathcal{O}}
\renewcommand{\cL}{\mathcal{L}}
\renewcommand{\cP}{\mathcal{P}}

\renewcommand{\alph}{\alphabet}

\newcommand{\iffl}{\shortversion{iff}\longversion{if and only if}\xspace}

\longversion{\usepackage[appendix=inline]{apxproof}}
\shortversion{\usepackage{apxproof}}%
\shortversion{}

\newif\ifinappendix
\newcommand{\appref}[1]{\shortversion{{$(*)$}}}
\newcommand{\applabel}[1]{
\shortversion{{$(*)$}}
\label{#1}}

\newtheoremrep{thm}[theorem]{Theorem}
\newtheoremrep{lem}[lemma]{Lemma}
\newtheoremrep{obs}[observation]{Observation}
\newtheoremrep{cor}[corollary]{Corollary}
\newtheoremrep{prp}[proposition]{Proposition}

\usepackage{comment}
\usepackage{lineno}
\nolinenumbers
\usepackage[T1]{fontenc}
\usepackage{graphicx}
\begin{document}
\title{On Languages Describing Large Graph Classes}
%
%\titlerunning{Abbreviated paper title}
% If the paper title is too long for the running head, you can set
% an abbreviated paper title here
%
\author{Henning Fernau\inst{1}\orcidID{0000-0002-4444-3220} \and
Pamela Fleischmann\inst{2}\orcidID{0000-0002-1531-7970} \and 
Kevin Mann\inst{1}\orcidID{0000-0002-0880-2513} \and
Silas Cato Sacher\inst{1}\orcidID{0009-0004-6850-1298}
}
\authorrunning{Fernau et al.}
% First names are abbreviated in the running head.
% If there are more than two authors, 'et al.' is used.
%
\institute{Trier University, Germany \email{\{fernau,mann,sacher\}@uni-trier.de} \and Kiel University, Germany \email{fpa@informatik.uni-kiel.de}}
\maketitle              % typeset the header of the contribution
\begin{abstract}
In this work, we introduce a new notion for representing graph classes with formal languages. In contrast to the seminal work by Kitaev and Pyatkin to represent graphs by words, we use formal binary languages in order to have a set of patterns (given by the languages' words) defining the edges in the graph. In particular, we investigate famous languages like the palindromes, copy-words, Lyndon words and Dyck words to represent all graphs or specific graph classes by restricting these languages.

\keywords{
Generalized word-representability 
\and Graph classes
\and Word-representable graphs 
\and Comparability graphs
\and $k$-colorable graphs
\and Palindromes
\and Dyck-language
\and Copy-language
\and Lyndon words.
}
\end{abstract}
\section{Introduction}
Graphs are a fundamental theoretical notion for describing relations between anything imaginable. Thus, the question how
to store efficiently formally given graphs in computers is as old as computer science itself, and the easiest way is the adjacency matrix which needs quadratic space in the number of the graph's nodes. In 2008 Kitaev and Pyatkin introduced the notion of word-representability \cite{KitPya2008} and even though it was meant to describe the Perkins semigroup, it also works as an elegant way to store a graph:
the graph's nodes are the word's letters and there is an edge between the nodes $u$ and $v$ iff the letters $u$ and $v$ alternate in the representing word, e.g., the {\em hour-glass graph} on five nodes $V=\{1,2,3,4,5\}$ where $1,2,3$ and $3,4,5$ build two triangles can be represented by $124534512$. Whereas every graph can be represented by an adjacency matrix, not all graphs are representable by words in this way (cf. the wheel graph $W_5$ \cite{KitPya2008,KitLoz2015}). Moreover, it is known to be \textsf{NP}-hard to decide whether a graph is representable or not. Such results motivate research into several directions including the one of altering the definition in order to represent all graphs.

Several variations and generalizations of word-representability have been described in the literature. The alternation of letters $\ta$ and $\tb$ in a word's projection onto $\{\ta,\tb\}$ equivalent to the avoidance of the factors $\ta\ta$ and $\tb\tb$. This leads to the generalization  via pattern avoiding words in~\cite{JonKPR2015}. For any pattern $u \in \{1,2\}^*$ such that $|u| \geq 2$, a graph $G=(V,E)$ is $u$-representable iff $\alphabet(w)=V$ and $\{\ta,\tb\} \in E$ iff $w$ avoids the pattern $u$ (cf. ~\cite[Def. 2]{JonKPR2015} for a formal definition). However, only in the case $u \in \{1\}^*$, their setting can be seen as defining abstract graphs and is hence comparable to classical word-representable graphs. Remarkably, for $|u| \geq 3$ every graph is $u$-representable. In particular, for $k \geq 3$, every abstract graph is $1^k$-representable. The {\em $k$-11-representability} generalizes this notion: a graph $G=(V,E)$ is \emph{$k$-11-represented} by a word $w\in V^*$ if, for all distinct vertices $\ta,\tb\in V$, at most $k$ times the patterns (factors) $\ta\ta$ and $\tb\tb$ occur in $w$'s projection onto $\{\ta,\tb\}$.  By definition, the $0$-11-representable graphs are the classical word-representable graphs, while it can be shown that the $2$-11-representable graphs are the class of all graphs  \cite{CheKKKP2019}. 
A different approach is the \emph{permutation representability} as defined in~\cite{KitSei2008}: here additional conditions on the words are given that can be used to represent graphs, not only on the patterns that define edges.
Lozin~\cite{Loz2008a} discussed several representations of graphs with finite automata. Another completely different way to connect the theory of regular languages with basic notions of graph theory, e.g., with bounded treewidth, was proposed in~\cite{DieFerWol2022}.  
In \cite{FleHLN2024a} it is shown that the language containing all words representing a specific graph is regular. However, there only one graph is investigated rather than an entire graph class.
Yet, there are other ways to connect words to graphs, and hence languages to graph classes. The notion of \emph{letter graphs} introduced by Petkovšek in~\cite{Pet2002}
gives rise to an infinite chain of graph classes where the words are simply restricted by the alphabet's size; in this context, this yields the graph parameter lettericity that recently obtained a certain popularity, cf.~\cite{AleAALZ2023} and related papers. A generalization of lettericity is contained in~\cite{FenFMRS2025}.

\noindent
\textbf{Own Contributions.}
Our approach seeks to generalize classical word-re\-pre\-sent\-ability by allowing for a broad variety of patterns in contrast to allowing or avoiding just one. As with classical word-rep\-re\-sent\-able graphs, the graph's nodes are the word's letters but there is an edge between the nodes $u$ and $v$ iff the pattern of $u$ and $v$ matches a word from a given binary language $L$, i.e., $h_{u,v}$ projects the word to $\{u,v\}$, maps $u$ to $0$, $v$ to $1$ and checks whether the resulting word is in~$L$. For example, $h_{u,v}(uuxvx) = 001$. The language $L_{\classical} \coloneqq(1\cup\emptyword)\cdot(01)^*\cdot (0\cup\emptyword)$ ($\emptyword$ is the empty word) is modeling classical word representability. Moreover, graph representation via pattern avoiding words, yields exactly the graphs characterized by $L_{\neg{u}}=\overline{\{0,1\}^*\{h_{1,2}(u)\}\{0,1\}^*}\cap \overline{\{0,1\}^*\{h_{2,1}(u)\}\{0,1\}^*}$ for a given pattern $u \in \{1,2\}^*$ with $|u| \geq 2$. Thus, language-representability can be understood as a generalization of graph representation via pattern-avoiding words. In particular, $L_{\neg{1^k}}$ describes the class of all graphs for each $k \geq 3$. The $k$-11-representable graphs can be modeled in our approach as well. 

In this work, we study the graph classes described by binary versions of famous languages such as the palindromes, the copy-language, the Dyck-language or Lyndon words. All of those, except for the Dyck-language, represent the class of all graphs and yield a representation that is as good as other implicit graph representations. The graphs described by the Dyck-language describe exactly the comparability graphs. We also show that we obtain specific graph classes if we intersect the aforementioned languages with other languages, e.g., the language for the classical word-representability.

\section{Preliminaries: Fixing General Notions and Notation}
\label{sec:prelims}

%\todohf{I still find this section simply horrible. There are so many notions introduced that are never used, in particular not in the first 12 pages. What a waste of space!}

In this section we introduce all notations and notions we need for our results and the framework in general.\longversion{Apart from the very brief next paragraph fixing notations for the integer numbers and set theory, this is a section containing one subsection for the field of {\em language theory (including combinatorics on words)} and one for the field of {\em graph theory}.}
Let $\N$ denote the natural numbers including $0$, let  $[n]=\{i\in\N \mid 1\leq i\leq n\}$ and  $\N_{\geq n}=\{k\in \N\mid k\geq n\}$ for some $n\in\N$.
%The cardinality of a set $X$ is denoted by $|X|$. 
For a set $X$ and $k\in\N$, define $\binom{X}{k}=\{Y\subseteq X\mid |Y|=k\}$. 
\paragraph*{Language Theory and Combinatorics on Words.}
\longversion{This subsection is dedicated to the definitions around the notion of {\em words} and sets of words, i.e. {\em languages}.
As in the natural understanding a word consists of letters (where the order matters) and a set of words is a {\em formal} language. Since our framework heavily relies on (different) formal languages in order to represent graphs, we need to introduce the notions from combinatorics on words to obtain the power of our framework.}
An \emph{alphabet} $\Sigma$ is a non-empty, finite set with \emph{letters} as elements. A \emph{word} over $\Sigma$ is a finite \emph{concatenation} of letters from~$\Sigma$; $\Sigma^{\ast}$ denotes the set of all words over~$\Sigma$, including the \emph{empty word}~$\emptyword$. The {\em length of a word $w$}, i.e., the number of its letters, is denoted by $|w|$. Let $\Sigma^+=\Sigma^{\ast}\setminus\{\emptyword\}$. 
%\longversion{ In other words, $\Sigma^*$ is the free monoid generated by~$\Sigma$, whose operation is called concatenation, mostly written by juxtaposition, only sometimes made explicit as~$\cdot$.} 
For $w\in\Sigma^{\ast}$ and for all $i\in[|w|]$, $w[i]$ denotes the $i^{\mbox{\tiny th}}$ letter of~$w$.
The number of occurrences of $\ta\in\Sigma$ in $w\in\Sigma^*$ is defined as $|w|_{\ta}=|\{i\in[|w|]\mid w[i]=\ta\}|$ and $w$'s alphabet is given by $\alphabet(w)=\{\ta\in\Sigma\mid\exists i\in[|w|]:\,w[i]=\ta\}$ as the symbols that occur in~$w$. Define $F(w)=\{|w|_a\mid a\in \Sigma\}$ as the 
{\em frequentnesses}\longversion{ of letters in~$w$}.
A word $w\in\Sigma^{\ast}$ is called \emph{$k$-uniform} for some $k\in\N$ if $|w|_{\ta}=k$ for all $\ta\in\Sigma$. 
\longversion{For instance $w=\mathtt{banana}$ is a word over the alphabet $\Sigma=\{\ta,\tb,\tn\}$ of length $6$, we have $|w|_{\ta}=3$, $w[6]=\ta$, and $w$ is not uniform for any $k\in\N$ since the letters do not have the same number of occurrences. The word $\mathtt{tamtam}$ is $2$-uniform.}
Define the \emph{reversal} $w^R$ of $w\in\Sigma^{\ast}$ by $w^R=w[|w|]\cdot w[|w|-1]\cdots w[1]$;  $w$ is a \emph{palindrome} if $w=w^R$ holds\shortversion{.}\longversion{, e.g., $\mathtt{step}\cdot\mathtt{on}\cdot\mathtt{no}\cdot\mathtt{pets}$.}
The word $w\in\Sigma^{\ast}$ is called a \emph{repetition} if there exist $u\in\Sigma^{\ast}$ and $k\in\N_{\geq 2}$ such that $w=u^k$ where $u^0=\emptyword$ and $u^k=uu^{k-1}$. Repetitions with $k=2$ are called \emph{copy-words} and words that are not a repetition are called {\em primitive}. \longversion{In addition to the concatenation of words, we need}\shortversion{We also use} the {\em shuffle product}
to combine two words:
for $u,v\in\Sigma^*$, define the {\em shuffle} by 
$u\shuffle v \coloneqq \{x_1y_1x_2y_2\cdots x_ny_n\mid \exists x_1,\dots,x_n, y_1,\dots,y_n\in\Sigma^*: u=x_1\cdots x_n\land v=y_1\cdots y_n\}$.
\longversion{As an example, consider the words $\mathtt{ml}$ and $\mathtt{eon}$; we can take the $\mathtt{m}$ from the first word, then the $\mathtt{eo}$ from the second, $\mathtt{l}$ from the first, and finally $\mathtt{n}$ from the second to get $\mathtt{meoln}$ as one word of the shuffle product. But also $\mathtt{melon}$, $\mathtt{emlon}$, and $\mathtt{mleon}$ are examples for words from the shuffle-product of these two words.}

One very famous set of words are the {\em Lyndon words}. \longversion{Their definition needs the notion of ordering words as e.g.  known from phone books. This requires an ordering
on the alphabet, like $\ta\prec\tb$, which is extended such that we have $\mathtt{apple}\prec\mathtt{banana}$. More formally, we extend}Given a linear ordering~$\prec$ on $\Sigma$, we extend it to the linear {\em lexicographical order} on~$\Sigma^*$, again denoted by $\prec$ and defined by $u\prec w$ \iffl either $u$ is a \emph{prefix} of~$w$, i.e., $w=ux$ for some $x\in\Sigma^{\ast}$, or $w=x\tb y_1$, $u=x\ta y_2$ for 
letters $\ta\prec \tb$ and $x,y_1,y_2\in\Sigma^{\ast}$. The second required basis for Lyndon words is the notion of {\em conjugates}: the words $uv$ and $vu$ (for $u,v\in\Sigma^*$) are known as \emph{conjugates}.
\longversion{For instance, $\mathtt{bellpepper}$ and $\mathtt{pepperbell}$ are conjugates. Intuitively, we obtain
the conjugates of a word by {\em rotating} it until we reach the original word again.} Thus,
$\Sigma^*$ is partitioned in\longversion{to} \emph{conjugacy classes}\longversion{ as being conjugate defines an equivalence relation on~$\Sigma^*$}. Finally,  a primitive word is a \emph{Lyndon word} if it is the lexicographically smallest element in its conjugacy class. If we want to make the ordering explicit, we also write $\prec$-Lyndon word, or $\prec$-smaller, etc. \longversion{For instance, the word $\mathtt{bellpepper}$ is the Lyndon word of its class since it is the smallest among all its conjugates (rotations).}
\longversion{The last part of the definitions from combinatorics on words introduces a specific mapping of words onto words.}
A mapping $f$ from the free monoid $\Sigma^{\ast}$ into another monoid is called a {\em morphism} if
$f(xy)=f(x)f(y)$ holds for all $x,y\in\Sigma^{\ast}$. Note that a morphism is uniquely defined by giving the images of all letters. For $A \subseteq \Sigma$,  the \emph{projective morphism} 
$h_A: \Sigma^* \rightarrow A^*$ is defined by  $h_A(\ta) = \ta$ for $\ta \in A$ and $h_A(\ta) = \emptyword$ otherwise. \longversion{Thus, we have $h_{\ta}(\mathtt{banana})=\ta\ta\ta$.} Let $\widetilde{\cdot}:\{0,1\}^*\rightarrow\{0,1\}^*$ be the \emph{complement morphism} mapping $0$ to $1$ and $1$ to $0$\longversion{. Here we have for instance $\widetilde{010}=101$. Note that the complement morphism is an involution, i.e.,}\shortversion{ which is an involution, i.e., } $\widetilde{{\widetilde{w}}}=w$.
\longversion{
\bigskip

We finish this section with the main definitions from formal language theory.}
Each \longversion{subset $L$ of $\Sigma^{\ast}$}\shortversion{$L\subseteq\Sigma^*$} is called a \emph{language} over~$\Sigma$\shortversion{.}\longversion{, e.g., the set of all Lyndon words is a language.} We extend the concatenation to languages, so that we can define powers of a language~$L$ and the \emph{Kleene star} of $L$ as $L^*=\bigcup_{n\in\N}L^n$ (analogously, we define other operations on words for entire languages). We define the \emph{symmetric hull} operator $\langle L\rangle\coloneqq L\cup \widetilde{L}$.   A language $L\subseteq \{0,1\}^{\ast}$ is called {\em $0$-$1$-symmetric} if $L=\widetilde{L}$ (i.e., $L=\langle L \rangle$). For instance, the language $L_{\classical} =(1\cup\emptyword)(01)^*(0\cup\emptyword)$ is $0$-$1$-symmetric. This also holds for its complement $\overline{L_{\classical}}$ where $\overline{L}$ denotes $\Sigma^{\ast}\backslash L$ for some language $L\subseteq\Sigma^{\ast}$. Let $\mathrm{freq}(L)=\{n\in\mathbb{N}_{\geq 1} \mid \exists w\in L: |w|_0 = n\}$ denote the set of frequentnesses in a $0$-$1$-symmetric language~$L\subseteq \{0,1\}^{\ast}$.
The language $\mathcal{D}=\{w\in \{0,1\}^* \mid \vert w\vert_0 =\vert w\vert_1 \wedge \forall i\in [\vert w\vert ]:\vert w[1..i]\vert_0 \leq \vert w[1..i]\vert_1 \}$ is the restricted Dyck-language with one pair of parentheses with $0$ for the opening and $1$ for the closing bracket (cf. $D_1^{\prime\ast}$ in \cite{Ber79}). More generally, $D_k^{\prime\ast}$ is the set of correctly parenthesized words formed with $k$  different types of parentheses, forming an alphabet $\Sigma$ of size~$2k$.
Identifying singleton sets with their elements, we can build \emph{regular expressions} from letters by using concatenation $\cdot$, union $\cup$, and Kleene star. We also include set complementation~$\bar\cdot$ and the \emph{shuffle}~$\shuffle$ when building such expressions. 

%\medskip

\paragraph*{Graph Theory.} 
Throughout this paper, we only consider undirected finite graphs. A graph~$G$ is a pair $(V,E)$ with the finite, non-empty set of vertices $V$ and the set of edges $E\subseteq\binom{V}{2}$. For a given graph $G$, let $V(G)$ and $E(G)$ denote its sets of vertices and edges respectively.
The cardinality of $V$ is called the \emph{order}
% order is used
of~$G$. If $\{u,v\}\in E$, $u$ is called a \emph{neighbor} of~$v$, and $N(v)\subseteq V$ are the neighbors of~$v$; $N[v]\coloneqq N(v)\cup\{v\}$ is the \emph{closed neighborhood} of~$v$. 
% neighbor and neighborhood are only used to define degree and universal and isolated vertices
The {\em degree} of \longversion{a vertex }$v\in V$ is\longversion{ defined as} $\deg(v)\coloneqq|N(v)|$. 
% degree is used
The \emph{complement} of the graph $G$, written $\overline{G}$, satisfies $\overline{G}=(V,\binom{V}{2}\setminus E)$. For a graph class $\mathcal{G}$, $\text{co-}\mathcal{G} \coloneqq \{ \overline{G} \mid G \in \mathcal{G}\}$. 
% \text{co-}\mathcal{G} is used
If $G_1=(V_1,E_1)$ and $G_2=(V_2,E_2)$ are graphs, then $\varphi:V_1\to V_2$ 
is a \emph{graph morphism} iff $\{u,v\}\in E_1$ implies $\{\varphi(u),\varphi(v)\}\in E_2$; a bijection $\varphi$ 
is a \emph{graph isomorphism} iff $\{u,v\}\in E_1\iff \{\varphi(u),\varphi(v)\}\in E_2$;\longversion{ if such an isomorphism exists,} then we write $G_1\simeq G_2$.
%graph isomorphism and \simeq are used
%A graph $G_1=(V_1,E_1)$ is a \emph{subgraph} of $G_2=(V_2,E_2)$ iff $V_1\subseteq V_2$ and $E_1\subseteq E_2$. $G_1$ is an  \emph{induced subgraph} of $G_2$ ($G_1=G_2[V_1]$) iff $E_1=E_2\cap \binom{V_1}{2}$.
%subgraph is not used
We call a graph with $n\in\N$ vertices and an empty edge set a {\em null graph}, denoted by $N_n$, and a graph with $n$ vertices and all possible edges a {\em complete graph}, denoted by $K_n$, i.e. $\overline{N_n}=K_n$.
%complete graphs are used
A graph with $n+m$ vertices, with $n,m\in\mathbb{N}_{\geq 1}$ is called a \emph{complete bipartite graph} $K_{n,m}=(V,E)$ if $V=U\cup W$, $U\cap W=\emptyset$, with $|U|=n$ and $|W|=m$ and $E$ contains all edges between $U$ and $W$ but no other edges. 
% complete bipartite graphs are used
A graph  with $n\in\N$ vertices is a \emph{path} $P_n=(V,E)$ if there is a linear ordering~$<$ on its vertex set such that $\{u,v\}\in E$ if either $u$ is the immediate predecessor of~$v$ in~$<$ or $u$ is the immediate successor of~$v$ in~$<$.
A graph  with $n\in\N$ vertices is a \emph{cycle} $C_n=(V,E)$ if it contains a path $P_n=(V,E')$ and one additional edge~$e$ that connects the two vertices of this path that have degree one.
A vertex in a graph is called 
%\emph{universal} if $N[v]=V$ and 
{\em isolated} if $N(v)=\emptyset$. 
%universal vertices are not used
%isoladed vertices are used
If $G_1=(V_1,E_1)$ and $G_2=(V_2,E_2)$ are two graphs with disjoint sets of vertices $V_1$ and $V_2$, then the (graph) \emph{union} of $G_1$ and $G_2$ is $G_1\cup G_2=(V_1\cup V_2,E_1\cup E_2)$.
% graph union is used 
%while the (graph) \emph{join}  of $G_1$ and $G_2$ is
%$G_1\nabla G_2=(V_1\cup V_2,E_1\cup E_2\cup \{\{x_1,x_2\}\mid x_1\in V_1, x_2\in V_2\})$.
%\longversion{We can describe many graph classes by our framework whose definitions are explained in}\shortversion{For the definition of many graph classes, we refer to} \url{graphclasses.org}.%\todohfInPlace{Here are the graph class definitions hidden.}
Let $G=(V,E)$ be a graph with a partition $V=V_1\cup V_2$. If $\binom{V_1}{2}\cap E=\emptyset$ and $\binom{V_2}{2}\cap E=\emptyset$, then $G$ is called \emph{bipartite}.  If $\binom{V_1}{2}\cap E=\emptyset$ and $\binom{V_2}{2}\cap E=\binom{V_2}{2}$, then $G$ is called a \emph{split graph}. If $\binom{V_1}{2}\cap E=\binom{V_1}{2}$ and $\binom{V_2}{2}\cap E=\binom{V_2}{2}$, then $G$ is called \emph{cobipartite}. $k$-colorability generalizes bipartiteness by partitioning the vertex set into $k$ independent sets instead of two.
If there exists a partial order on the vertex set~$V$ of  $G=(V,E)$ such that $u,v\in V$ are incomparable if and only if $\{u,v\}\in E$, then $G$ is a \emph{comparability graph}.

\paragraph*{Our Main Notions.}

After establishing the basics from language and graph theory, we are now able to introduce the main notions for our setting. We start with a definition, which allows us to compare words over an alphabet of a graph's nodes with a binary language. This new morphism $h_{\ta,\tb}$ can be viewed as the composition of the projection $h_{\{u,v\}}$ and the renaming isomorphism $\iota:\{u,v\}^*\to\{0,1\}^*$ with $u\mapsto 0$ and $v\mapsto 1$
for some given nodes $u,v$. For example, if we have $\ta,\tb,\tn\in V$ and $\mathtt{banana}\in V^{\ast}$, we get $h_{\ta,\tb}(\mathtt{banana})=1000$. If we have $0$-$1$-symmetric languages, it is unimportant whether $u$ or $v$ is mapped to $0$ or $1$ resp.; it only influences later the names of the represented nodes.

\begin{definition}
Let $V$ be an alphabet and fix $u,v\in V$ with $u\neq v$. Define $h_{u,v}:V^{\ast}\rightarrow\{0,1\}^{\ast}$ by $u\mapsto 0$, $v\mapsto 1$ and $x\mapsto \emptyword$ for $x\in V\setminus\{u,v\}$.
\end{definition}

\noindent
Based on $h_{u,v}$, we introduce the notion of $L$-representability of graphs.
\begin{definition}

Let $L\subseteq\{0,1\}^*$ be $0$-$1$-symmetric. Let $w\in V^*$ with $V=\alphabet(w)$.
We call a graph $G=(V,E)$ \emph{$L$-represented by $w\in V^{\ast}$} iff, for all $u,v\in V$ with $u \neq v$, we have
\(\{u,v\}\in E \Leftrightarrow h_{u,v}(w)\in L. \)
A graph  $G=(V,E)$  is \emph{$L$-representable} if it can be $L$-represented by some word $w\in V^*$. 
Let $G(L,w)$ denote the graph $G$ that is $L$-represented by $w\in\Sigma^{\ast}$ and let $\cG_L=\{G(L,w)\mid w\in\Sigma^*, \Sigma=\alphabet(w)\}$.
\end{definition}

As $0$-$1$-symmetry is essential for $G(L,w)$ and $\mathcal{G}_L$ being well-defined, we will assume this condition for any binary language used for defining graphs and graph classes in the following.
%We will also say that, e.g., $C_4$ is $L$-represented by some word $w\in V^*$ (with $|V|=4$), referring to some abstract graph.
We finish this section with some immediate insights and examples. Since $h_{u,v} \in L$ iff $h_{u,v} \notin \overline{L}$ for every language $L$ and vertices $u,v$, we have $G(L,w) = \overline{G(\overline{L},w)}$ and we get:

\begin{lemma}%[{\cite[Lemma 4.2.2]{FenFFMS2024}}]
\label{lem:compl}
For each $L \subseteq \{0,1\}^*, \mathcal{G}_{\overline{L}} = \text{co-}\mathcal{G}_{L}$.
\end{lemma}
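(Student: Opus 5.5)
The plan is to prove the pointwise identity $G(\overline{L},w)=\overline{G(L,w)}$ for every word $w$, and then derive the class equality from it. Throughout, $\overline{L}$ is the complement $\{0,1\}^*\setminus L$. First I would check that $\overline{L}$ is $0$-$1$-symmetric whenever $L$ is, so that $\mathcal{G}_{\overline{L}}$ is well defined. The complement morphism $\widetilde{\cdot}$ is a bijection on $\{0,1\}^*$, since it is an involution. Hence $\widetilde{\overline{L}}=\overline{\widetilde{L}}=\overline{L}$.

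Next, fix $w$ with $V=\alphabet(w)$ and distinct $u,v\in V$. We have $h_{u,v}(w)\in\overline{L}$ if and only if $h_{u,v}(w)\notin L$. By the definition of $L$-representation, this holds if and only if $\{u,v\}\notin E(G(L,w))$. Both graphs have vertex set $\alphabet(w)$, so $E(G(\overline{L},w))=\binom{V}{2}\setminus E(G(L,w))$, which is exactly $G(\overline{L},w)=\overline{G(L,w)}$.

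Finally, I would prove the two inclusions. Take $H\in\mathcal{G}_{\overline{L}}$, so $H=G(\overline{L},w)=\overline{G(L,w)}$ for some $w$. Since $G(L,w)\in\mathcal{G}_L$, we get $H\in\text{co-}\mathcal{G}_L$. Conversely, take $H=\overline{G}$ with $G=G(L,w)\in\mathcal{G}_L$. Then $H=G(\overline{L},w)\in\mathcal{G}_{\overline{L}}$. If the classes are read up to isomorphism, the same argument applies: an isomorphism $G\simeq G'$ is also an isomorphism $\overline{G}\simeq\overline{G'}$.

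None of these steps is a real obstacle. The only point needing care is bookkeeping: $\overline{L}$ must inherit $0$-$1$-symmetry, and both graphs must use the same vertex set $\alphabet(w)$. Together these ensure that edge-set complementation is taken within the same $\binom{V}{2}$.
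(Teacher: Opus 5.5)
Your proposal is correct and follows the paper's argument: the pointwise identity $G(\overline{L},w)=\overline{G(L,w)}$, which comes from $h_{u,v}(w)\in\overline{L}\iff h_{u,v}(w)\notin L$, gives the class equality at once. Your checks that $\overline{L}$ inherits $0$-$1$-symmetry and that both graphs share the vertex set $\alphabet(w)$ are details the paper leaves implicit.
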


\begin{lemrep}\applabel{lem:clique-representation} 
Let $V$ be an alphabet and $w\in V^*$. If $0^k\shuffle 1^\ell\subseteq L\subseteq\{0,1\}^*$ for all $k,\ell\in F(w)$, then $K_{|V|}$ is $L$-represented by~$w$.
\end{lemrep}

\begin{proof}
Consider $G=(V,E)=G(L,w)$. Let $u,v\in V$ be two arbitrary letters. By definition, $|w|_u,|w|_v\in F(w)$. Hence, $h_{u,v}(w)\in 0^{|w|_u}\shuffle 1^{|w|_v}\subseteq L$, i.e., $\{u,v\}\in E$. Therefore, $G$ is a complete graph.
\end{proof}

%Formally, we use the natural numbers as vertex names in order to fix a countable alphabet over which we form possible finite words and which hence serves as a potentially infinite source of vertex names. 
%For better readability, we will also use other vertex names. Before, we give insights and results in our framework, we describe some graphs with the help of some formal languages.

\longversion{As it can be quite tedious to list all words of a finite language, we defined for $L\subseteq\{0,1\}^*$, 
$\langle L\rangle\coloneqq L \cup \{ \widetilde{w} \mid w \in L \}\,.$
Clearly, $\langle \cdot\rangle$ is a hull operator, so that it is, in  particular, idempotent, i.e., $\langle L\rangle=\langle\langle L\rangle\rangle$. 
To further simplify notation and to help recognize patterns, for $w\in\{0,1\}^*$ let $\nu(w)$
denote the lexicographically smallest element of $\{w,\widetilde{w}\}$, 
called \emph{normal form} of~$w$. For example, $011$ is the normal form of $100$.
Let $\nu(L)=\{\nu(w)\mid w\in L\}$. Clearly, $\langle\nu(L)\rangle=\langle L\rangle$, so that we can take $\nu(L)$ as the normal form of~$L$. For finite languages, we also omit braces in this notation and list the normal forms in length-lexicographical order.
For example, $L=(0\shuffle 1)\shuffle\{0,1\}=\langle 001,010,011\rangle$ in the normal form presentation as $\nu(L)=\{001,010,011\}$. Without further explicit mentioning, this will be our convention for writing down finite binary languages.}
%\paragraph*{First Results}

\begin{lemrep}\applabel{lem:representability-reversal} 
A graph~$G$ can be $L$-represented \iffl it can be $L^R$-represented. 
\end{lemrep}

\begin{proof} 
For every word $w$ over an alphabet $V$, $G(L,w) = G(L^R,w^R)$. Hence, a graph~$G$ can be $L$-represented (by~$w$) \iffl it can be $L^R$-represented (by~$w^R$). 
\end{proof}
\longversion{By the previous proof, we obtain the following corollary.}

\begin{corollary}\label{cor:L-symmetry}
    If $L\subseteq \{0,1\}^*$ \longversion{be a language with}\shortversion{satisfies} $L=L^R$, then $G(L,w)= G(L,w^R)$ for each word $w$.
\end{corollary}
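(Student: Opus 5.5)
The plan is to derive the corollary directly from the identity established in the proof of \cref{lem:representability-reversal}: for every word $w$ over an alphabet $V$, $G(L,w)=G(L^R,w^R)$. Specializing this identity to a language with $L=L^R$ gives $G(L,w)=G(L^R,w^R)=G(L,w^R)$, which is exactly the claim.

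To make the argument self-contained, I would first re-verify the underlying identity at the level of a single pair of vertices. The key observation is that $h_{u,v}$ is a morphism and that reversal is an anti-morphism. Hence $h_{u,v}(w^R)=h_{u,v}(w)^R$ for all distinct $u,v\in V$; this follows by a trivial induction on $|w|$, or letter by letter, since $h_{u,v}$ acts on each letter independently.

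With this in hand, the main step is a chain of equivalences for arbitrary distinct $u,v\in V$:
\[
\{u,v\}\in E(G(L,w^R)) \iff h_{u,v}(w)^R\in L \iff h_{u,v}(w)\in L^R=L \iff \{u,v\}\in E(G(L,w)).
\]
Finally, I would note that $\alphabet(w^R)=\alphabet(w)$, so both graphs share the vertex set $V$. Together with the edge equivalence, this shows the two graphs are identical, not merely isomorphic.

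There is no real obstacle here. The only point needing care is that $L^R$ must again be $0$-$1$-symmetric for the graphs involved to be well-defined. This holds because complementation and reversal commute, giving $\widetilde{L^R}=(\widetilde L)^R=L^R$; in the present case it is automatic anyway, since $L^R=L$.
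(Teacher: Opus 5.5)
Your proposal is correct and follows the paper's route: the paper obtains the corollary directly from the identity $G(L,w)=G(L^R,w^R)$ established in the proof of the reversal lemma, specialized to $L=L^R$. Your pairwise verification via $h_{u,v}(w^R)=h_{u,v}(w)^R$ and your check of $0$-$1$-symmetry are correct extra detail.
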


Intuitively, one may assume that  $G$ is $L$-representable \iffl $G$ is $(L\cup L^R)$-representable but consider given $L = \overline{\langle 0001\rangle }$ and $w = \ta\ta\ta\tb$, we have $G(L,w) = (\{\ta,\tb \}, \emptyset)$, $L \cup L^R = \{ 0, 1 \}^*$, and $\mathcal{G}_{L \cup L^R}$ is the class of complete graphs and $G \notin \mathcal{G}_{L \cup L^R}$.
We finish this section by briefly looking into $0$-$1$-symmetric languages~$L$ that are closed under reversal ($L=L^R$), e.g.,  $L_{\text{cl}}$ is closed under reversal. Note that $L_{\text{cl}}$ equals %$L_{\neg 1^k}$ 
\longversion{$$L_{\neg 1^k}=\overline{\{0,1\}^*\{0^k\}\{0,1\}^*}\cap \overline{\{0,1\}^*\{1^k\}\{0,1\}^*}$$}\shortversion{$L_{\neg 1^k}=\overline{\{0,1\}^*\{0^k\}\{0,1\}^*}\cap \overline{\{0,1\}^*\{1^k\}\{0,1\}^*}$}
for $k = 2$. By \cite{JonKPR2015}, we know that $\mathcal{G}_{L_{\neg 1^k}}$ corresponds to the family of graphs called $1^k$-representable. There it is also shown that, for every $k \geq 3$, \emph{every} graph is  $1^k$-representable., i.e.,  $\mathcal{G}_{L_{\neg 1^k}}$ is the class of all (undirected) graphs. For instance, we have for $k=3$, 
$G(L_{\neg 1^3},w) = ([4], \{ \{1,2\}, \{2,3\}, \{3,4\}, \{4,1 \},\{2,4\}\})$ with the word $w = 1124112341234$. More representations of the cycle $C_4$ can be found in \autoref{fig:languages-C4}.

\begin{figure}
%\centering
\begin{minipage}[m]{0.2\textwidth}
\begin{center}
 \begin{tikzpicture}
    \tikzset{every node/.style={fill = white,circle,minimum size=0.05cm}}
    
            \node[draw] (x1) at (0,0) {$1$};
            \node[draw] (x2) at (1,0) {$2$};
            \node[draw] (x3) at (1,-1) {$3$};
            \node[draw] (x4) at (0,-1) {$4$};
    
            \path (x1) edge[-] (x2);
            \path (x2) edge[-] (x3);
            \path (x3) edge[-] (x4);
            \path (x4) edge[-] (x1);
    \end{tikzpicture} 
%\caption{Graph classes represented by certain length-uniform and nearly length-uniform languages} \label{tab:graphClasses}
\end{center}
\end{minipage}
\begin{minipage}[t]{0.71\textwidth}
\begin{tabular}[width=.5\textwidth]{| l |  l |}
\hline 
$L$ & $w$ \\
\hline 
 $L_{\neg{1^3}}$  \scalebox{.68}{\cite[Thm. 4]{JonKPR2015}} & \tiny $2213212431124112341234$ \\

$L_{1112}$  \scalebox{.68}{\cite[Thm. 3.12]{GaeJi2020}} & \tiny $221324124311241312341234$ \\
$L_{2\text{-}11}$  \scalebox{.68}{\cite[Thm. 5.2]{CheKKKP2019}}  & \tiny $14323412413224314231243143124312$\\
\hline
$L_{\cP}$ \scalebox{.68}{(Thm.~\ref{thm:palindromes-get-all})} & \tiny $423121123142$  \\
$L_{\mathcal{C}}$ \scalebox{.68}{(Thm.~\ref{thm:copy-words-get-all})}  & \tiny $121324123142$ \\
$L_{\mathcal{L}}$ \scalebox{.68}{(Thm.~\ref{thm:Lyndon-words-get-all})}\negthinspace  & \tiny $111222333444123412341124113234234223224343433433444444$ \\
%\hline
$L_{\mathcal{D}}$ \scalebox{.68}{(Thm.~\ref{thm:dyck_comparability})}\negthinspace& \tiny $12341234132412341234$ \\
\hline
\end{tabular}

    \end{minipage}
    \caption{Language $L$ represents $C_4$ with the word $w$. }
    \label{fig:languages-C4}
\end{figure}
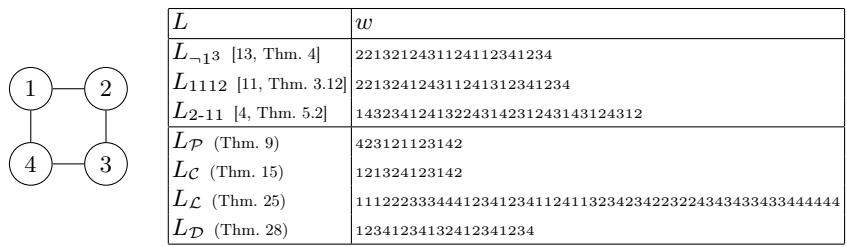

\section{Representing Graphs with Famous Languages}
In this section, we turn our attention to rather `famous' (infinite, non-regular) languages as the set of palindromes, the-copy language, the Dyck-language or the set of Lyndon words and look at their associated graph classes. Thus, in all paragraphs we start with the associated binary $0$-$1$-symmetric languages. We will see that most of these languages over the arbitrary alphabets allow to describe all graphs in a rather efficient way. Illustrations of these representations and the underlying constructions can be found in \autoref{fig:languages-C4}. Some proofs were omitted due to space restrictions; this is marked by $(*)$.

\medskip

\noindent
\textbf{Palindromes.}
Set $L_\mathcal{P}\coloneqq\{u\in\{0,1\}^+\mid u=u^R\}$. This language can be restricted to the deterministic context-free version of the binary palindrome language $L_{\text{detP}} :=\{ h_{\text{double}}(w) 01 (h_{\text{double}}(w))^R, h_{\text{double}}(w) 10 (h_{\text{double}}(w))^R \mid w \in \{0,1\}^{\ast}\}$ with the morphism $h_{\text{double}}: \{0,1\} \rightarrow \{0,1\}$ such that $h(\ta) = \ta\ta$ for all $\ta \in \{0,1\}$. Motivated by \cite{FleHLN2024a}, we also define
$\cG_{\star}=\{K_{1,n}\cup N_m\mid n,m\in\N\}$ as the set of all extended star graphs.
Regarding palindromes in general, notice that $h_{\ta,\tb}(w^R)=h_{\ta,\tb}(w)^R$ for each palindrome $w\in\Sigma^{\ast}$ and all $\ta,\tb\in\Sigma$. Our first lemma shows that being a palindrome can be characterized by the palindrome property of all projections to binary sub-alphabets. Besides being a nice word-combinatorial property, it\longversion{ also} has a graph-theoretic interpretation, see \autoref{cor:palindromes-and-complete-graphs}. \longversion{We will see similar properties for other well-known languages below.}

\begin{lemrep}\applabel{lem:binary-palindromes}
A word $w\in\Sigma^{\ast}$ is a palindrome \iffl, for all $\ta,\tb\in\Sigma$ with $\ta\neq\tb$, $h_{\ta,\tb}(w)$ is a palindrome.
\end{lemrep} 
\begin{proof}
W.l.o.g., we can assume $|\Sigma| > 1$ because for $|\Sigma| = 1$, the statement holds trivially. 
Let $w$ be a palindrome and $\ta,\tb\in\Sigma$. Choose $u\in\Sigma^{\ast}$ and $v\in\Sigma\cup\{\emptyword\}$ with $w=uvu^R$.
We have that $h_{\ta,\tb}(w)=h_{\ta,\tb}(u)h_{\ta,\tb}(v)h_{\ta,\tb}(u)^R$ which is a palindrome.\longversion{

}
Let now all projections onto two letters be palindromes. If $w[1]\neq w[|w|]$, then $h_{\{w[1],w[|w|]\}}(w)$ would not be a palindrome and thus we have $w[1]=w[|w|]$. Consider now $w[2..|w|-1]$. By the same argument, we get $w[2]=w[|w|-1]$. Inductively, we get that $w$ is a palindrome. \qed
\end{proof}

 The previous lemma links the set of all palindromes to  $L_\mathcal{P}$ and we show that the (deterministic) palindrome languages
 represents all graphs.

\begin{corollary}\label{cor:palindromes-and-complete-graphs}
A word~$w$ is a palindrome \iffl $G(L_\mathcal{P},w)\simeq K_{|\alph(w)|}$.
\end{corollary}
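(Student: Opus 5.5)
This follows almost directly from \autoref{lem:binary-palindromes} once we unfold the definition of $G(L_\mathcal{P},w)$. Write $V=\alph(w)$ and $G=G(L_\mathcal{P},w)$. By definition, for distinct $u,v\in V$ we have $\{u,v\}\in E(G)$ \iffl $h_{u,v}(w)\in L_\mathcal{P}$. So $G\simeq K_{|V|}$, i.e.\ $G$ equals the complete graph on $V$ (the vertex set is already $V$, so isomorphism to $K_{|V|}$ means every pair is an edge), \iffl $h_{u,v}(w)\in L_\mathcal{P}$ for all distinct $u,v\in V$.

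The one point requiring care is that $L_\mathcal{P}$ contains only \emph{nonempty} palindromes, whereas \autoref{lem:binary-palindromes} speaks of $h_{\ta,\tb}(w)$ being a palindrome, which includes the empty word. For $u,v\in\alph(w)$ the projection $h_{u,v}(w)$ contains at least one $0$ and one $1$, so it is nonempty. Hence, for letters of $\alph(w)$, ``$h_{u,v}(w)$ is a palindrome'' and ``$h_{u,v}(w)\in L_\mathcal{P}$'' are equivalent.

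It remains to reconcile the quantifier ranges. \autoref{lem:binary-palindromes} quantifies over all pairs in $\Sigma$, while the graph only involves pairs in $\alph(w)$. Since the palindrome property of $w$ does not depend on the ambient alphabet, we apply the lemma with $\Sigma=\alph(w)$. Then $w$ is a palindrome \iffl every $h_{u,v}(w)$ with $u\neq v$ in $\alph(w)$ is a palindrome \iffl every such pair is an edge of $G$ \iffl $G\simeq K_{|\alph(w)|}$.

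Finally, consider the degenerate case $|\alph(w)|\le 1$. Here the lemma holds trivially, and $G$ is $K_1$, or the graph on the empty vertex set if $w=\emptyword$. The corollary then states that $w$ is a palindrome, which is indeed true for $w\in\ta^*$. I expect no real obstacle beyond this bookkeeping about the empty word and the choice of alphabet.
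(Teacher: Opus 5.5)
Your proof is correct and is essentially the paper's argument: the paper states this corollary without proof as an immediate consequence of \autoref{lem:binary-palindromes}, obtained by unfolding the definition of $G(L_\mathcal{P},w)$ over the alphabet $\alph(w)$, as you do. Your remark that $h_{u,v}(w)$ is nonempty for $u,v\in\alph(w)$, which is needed because $L_\mathcal{P}\subseteq\{0,1\}^+$, and your check of the degenerate cases cover bookkeeping the paper leaves implicit.
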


\begin{theorem}
\label{thm:palindromes-get-all}
  The binary palindrome language  $L_{\mathcal{P}}$  represent every graph. 
\end{theorem}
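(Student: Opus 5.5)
The plan is to build the representing word explicitly, so that its $\{\ta,\tb\}$-projection has the shape $x\,p\,y$ with $p$ a palindrome. Whether $h_{\ta,\tb}(w)$ is a palindrome then reduces to comparing the two outer parts, and the outer parts encode the non-edges. I will not need \autoref{cor:palindromes-and-complete-graphs} beyond the intuition it gives: a word $\pi\pi^R$, with $\pi$ a permutation of $V$, represents $K_{|V|}$. The non-edges are "cut out" by gadgets placed on both sides of it.

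\textbf{Construction.} Let $G=(V,E)$ and fix a permutation $\pi$ of $V$; the middle factor $\pi\pi^R$ is a palindrome and guarantees $\alph(w)=V$. Enumerate the non-edges $\{x_1,y_1\},\dots,\{x_m,y_m\}$ of $G$ and let $g_k\coloneqq x_ky_k$. Put
\[ w\coloneqq g_1g_2\cdots g_m\;\pi\pi^R\;g_mg_{m-1}\cdots g_1 . \]
So the right part lists the same gadgets in reverse order, but each gadget is \emph{not} reversed internally. The right part $H'$ differs from the true reversal $H^R$ of the left part $H=g_1\cdots g_m$ only in that every gadget appears as $x_ky_k$ instead of $y_kx_k$.

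\textbf{Verification.} Fix distinct $\ta,\tb\in V$ and let $h=h_{\ta,\tb}$. Then $h(w)=h(H)\,h(\pi\pi^R)\,h(H')$, with $h(\pi\pi^R)\in\{0110,1001\}$ a palindrome. Since $|h(H)|=|h(H')|$, the word $h(w)$ is a palindrome \iffl $h(H')=h(H)^R$. For each $k$, $h(g_k)$ is empty or a single letter unless $\{x_k,y_k\}=\{\ta,\tb\}$; in the latter case it is $01$ or $10$. Also $h(H)^R=h(g_m)^R\cdots h(g_1)^R$ and $h(H')=h(g_m)\cdots h(g_1)$.
\begin{itemize}
\item If $\{\ta,\tb\}\in E$, every factor $h(g_k)$ has length at most $1$, so $h(g_k)^R=h(g_k)$. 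Hence $h(H')=h(H)^R$ and $\{\ta,\tb\}$ is an edge of $G(L_{\mathcal P},w)$.
\item If $\{\ta,\tb\}=\{x_j,y_j\}$ is a non-edge, that pair occurs as exactly one gadget $g_j$. All other factors agree, and aligned block by block the two words differ exactly at block $j$, which reads $01$ versus $10$. Hence $h(w)\notin L_{\mathcal P}$ and $\{\ta,\tb\}$ is not an edge.
\end{itemize}
Thus $G(L_{\mathcal P},w)=G$.

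\textbf{Main obstacle.} The delicate point is making each gadget invisible to every pair other than its own. The "reverse order, keep orientation" trick handles this: a gadget contributes at most one letter to a foreign pair, and single letters are self-reverse. The remaining care is the block-by-block comparison, which is legitimate because corresponding blocks have equal length. As a by-product, $|w|=2|V|+4m$, which is linear in the size of the complement graph.
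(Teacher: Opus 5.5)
Your proof is correct, but it packages the asymmetry differently from the paper. The paper builds the word recursively, $w_1=11$ and $w_i=i\,u_i\,w_{i-1}\,i\,u_i^R$, where $u_i$ lists the non-neighbours of $i$ in $[i-1]$. It then proves by induction on $i$ that $h_{j,k}(w_i)$ is a palindrome \iffl $\{j,k\}\in E$. Pairs not involving $i$ see the symmetric wrap $h(u_i)\cdots h(u_i)^R$. For the pair $\{i,k\}$, the letter $i$ precedes $u_i$ on both sides, which breaks the symmetry exactly when $k$ is a non-neighbour.

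Unrolled, $w_n=n\,u_n\cdots 2\,u_2\;11\;2\,u_2^R\cdots n\,u_n^R$. This has the same shape as your word: the right part mirrors the left part at block level but not inside the blocks. The difference is that the paper uses one gadget per vertex (block $i\,u_i$ becomes $i\,u_i^R$ instead of $u_i^R\,i$), whereas you use one gadget per non-edge.

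What your version buys: the verification is a single induction-free block comparison. It needs no vertex numbering, and the gadgets can be listed in any order. What the paper's version buys: sharing the vertex letter among all of a vertex's non-neighbours gives length $2n+2\bar m$, where $\bar m$ is the number of non-edges, versus your $2n+4\bar m$. Its recursive template is also what the paper adapts, with letter doubling, for $L_{\text{detP}}$. Both constructions give the $\mathcal{O}(n^2\log n)$-bit bound.
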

\begin{proof}
    Let $G=([n],E)$ be a graph with $n\in \N$. Define, for $i\in [n]$, $u_i\in [i-1]^*$ as the word which enumerates $ \overline{N(i)}\cap [i-1]$ in linear order as well as \shortversion{$w_1\coloneqq 11$ and $w_i\coloneqq iu_iw_{i-1}iu_i^R$ if $i\in\N_{\geq 2}$.}\longversion{    $$w_i\coloneqq \begin{cases}11,& i=1,\\
    iu_iw_{i-1}iu_i^R, & i\neq 1
    \end{cases}.$$} We want to show that $h_{j,k}(w_i) = h_{j,k}(w_i)^R$ \iffl $\{j, k\}\in E$ for all $i\in [n]$ and $j,k\in [i]$ with $j\neq k$. 
    For $i=1$ this trivially holds. So let $i>1$. Since $h_{j,k}$ is a morphism, 
    \shortversion{
    $
    h_{j,k}(iu_i w_{i-1} iu_i^R)^R 
    = h_{j,k}(u_i^R)^R h_{j,k}(i)^R h_{j,k}(w_{i-1})^R h_{j,k}(u_i^R) h_{j,k}(i)^R
    =h_{j,k}(u_i iw_{i-1}^R u_i^Ri)
    $.
    }
    \longversion{
    \begin{align*}
    h_{j,k}(iu_i w_{i-1} iu_i^R)^R 
    &= h_{j,k}(u_i^R)^R h_{j,k}(i)^R h_{j,k}(w_{i-1})^R h_{j,k}(u_i^R) h_{j,k}(i)^R\\ 
    &= h_{j,k}(u_i) h_{j,k}(i) h_{j,k}(w_{i-1})^R h_{j,k}(u_i^R) h_{j,k}(i)\\
    &=h_{j,k}(u_i iw_{i-1}^R u_i^Ri).
    \end{align*}
    }
    First we assume $i\notin \{j,k\}$. Then $h_{j,k}(w_i)= h_{j,k}(u_i) h_{j,k}(w_{i-1}) h_{j,k}(u_i^R)$. This equals $h_{j,k}(w_i^R) = h_{j,k}(u_i) h_{j,k}(w_{i-1})^R h_{j,k}(u_i^R)$ \iffl $h_{j,k}(w_{i-1})^R= h_{j,k}(w_{i-1})$. By induction, we know this is the case \iffl $\{j,k\}\in E$. 
    Now we can assume $j=i$. If $\{i,k\}\notin E$, then $h_{i,k}(w_i)= h_{i,k}(u_i)\cdot h_{i,k}(i)\cdot h_{i,k}(w_{i-1})\cdot h_{i,k}(u_i^R)\cdot h_{i,k}(i) = 1\cdot 0\cdot h_{i,k}(w_{i-1})\cdot 1\cdot 0 $ as $k<i$, which is no palindrome. For $\{i,k\}\in E$, $h_{i,k}(w_i)=0 h_{i,k}(w_{i-1})0$. This is a palindrome as $h_{i,k}(w_{i-1})\in 1^+$. By induction, $G\simeq G(L_\cP,w_n)$ follows. \qed
\end{proof}

\begin{theorem}
\label{thm:ldetp}
$\mathcal{G}_{L_{\text{\emph{detP}}}}$ is the class of all graphs. 
\end{theorem}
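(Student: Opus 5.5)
The plan is to reuse the idea behind \autoref{thm:palindromes-get-all}: build a word with a palindromic skeleton, and break the symmetry only for non-adjacent pairs. This time the skeleton must respect the rigid shape of $L_{\text{detP}}$. First I would fix the structure of $L_{\text{detP}}$. Every word in it has even length and the form $x\,c\,x^R$ with $x\in\{00,11\}^*$ and $c\in\{01,10\}$. Hence it lies in $L_{\text{detP}}$ exactly when three conditions hold: its two middle letters are distinct, the prefix before them is a concatenation of blocks $00$ and $11$, and the suffix after them is the reversal of that prefix. This gives a simple membership test to apply to projections.

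Next comes the construction. Let $G=([n],E)$ and $\sigma\coloneqq 12\cdots n$. List the non-edges as $\{j_1,k_1\},\dots,\{j_m,k_m\}$ with $j_t<k_t$. Put $B_t\coloneqq j_tj_tk_tk_t$, $P\coloneqq B_1B_2\cdots B_m$ and $P'\coloneqq B_mB_{m-1}\cdots B_1$, and set $w\coloneqq P\,\sigma\,P'$. Note that $P'$ reverses the order of the blocks but not the blocks themselves. A genuine reversal would use $k_tk_tj_tj_t$ instead of $B_t$.

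Then I verify $G(L_{\text{detP}},w)=G$. Fix $j\neq k$. In $h_{j,k}(P)$ and $h_{j,k}(P')$ every letter occurs in doubled runs, so both are concatenations of $00$ and $11$ and have the same length. Since $h_{j,k}(\sigma)\in\{01,10\}$ sits exactly in the middle, only the reversal condition remains to check.

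\begin{itemize}
\item A block $B_t$ containing exactly one of $j,k$ projects to a single $00$ or $11$. Its counterpart in $P'$ is the same doubled run at the mirrored position, so it is consistent with reversal.
\item Blocks containing neither letter vanish under the projection.
\end{itemize}

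So if $\{j,k\}\in E$, then $h_{j,k}(P')=h_{j,k}(P)^R$ and the projection lies in $L_{\text{detP}}$. If $\{j,k\}=\{j_t,k_t\}$ is a non-edge, $B_t$ contributes $0011$ on the left, after orienting $j$ to $0$ (the language is $0$-$1$-symmetric). The same block contributes $0011$ again at the mirrored position on the right, whereas reversal would require $1100$ there. Hence $h_{j,k}(P')\neq h_{j,k}(P)^R$ and the pair is not adjacent.

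The main obstacle is this last step: making sure the mismatch cannot be absorbed by some other decomposition of the word. This is exactly why I would first pin down that membership in $L_{\text{detP}}$ is decided position by position against the fixed center. Since each unordered non-edge produces exactly one block $B_t$, the only $2$-letter blocks in the projection onto $\{j,k\}$ are the two copies of $B_t$. At the mirrored positions the two sides then differ letter by letter, which completes the argument.
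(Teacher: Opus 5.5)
Your proof is correct and rests on the same mechanism as the paper's. The central enumeration of all vertices forces every projection $h_{j,k}(w)$ to have middle pair $01$ or $10$, with halves of equal length in $\{00,11\}^*$. Membership in $L_{\text{detP}}$ then reduces to the reversal condition, and the construction breaks that condition exactly for non-edges.

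The difference is in how the word is organized:
\begin{itemize}
\item \textbf{The paper} copies the recursive construction from Theorem~\ref{thm:palindromes-get-all}. It sets $w_1=11\langle[n]\rangle 11$ and $w_i=ii\,u_i\,w_{i-1}\,ii\,u_i^R$, where $u_i$ is the doubled enumeration of $\overline{N(i)}\cap[i-1]$. For $j<i$ non-adjacent, layer $i$ projects to $1100$ on both sides of the center, whereas mirroring would require $0011$ on the right. For pairs below $i$, $u_i^R$ is a genuine mirror image. Verifying this needs an induction on $i$, which the paper only indicates as ``analogous''.
\item \textbf{Your construction} is flat, with one unreversed block $j_tj_tk_tk_t$ per non-edge. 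This replaces the induction with a direct block-by-block comparison. You also state explicitly that the decomposition $x\,c\,x^R$ of a word in $L_{\text{detP}}$ is unique, a fact the paper's argument uses only implicitly.
\end{itemize}

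In exchange, the paper's version groups all smaller non-neighbors of a vertex into a single layer and reuses the palindrome proof almost verbatim. Both words have length linear in $n$ plus the number of non-edges.
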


\begin{proof}
Let $G=([n],E)$ be a graph with $n\in \N$. Define $\langle M \rangle$ as the words that enumerate the vertices of the set $M \subseteq [n]$ in linear order. 
Define, for $i\in [n]$, $u_i := h_{\text{double}}(\langle \overline{N(i)}\cap [i-1] \rangle)$ and 
\longversion{
$$w_i\coloneqq 
\begin{cases}
11 \langle [n] \rangle 11,& i=1,\\
ii u_i w_{i-1} ii u_i^R, & i\neq 1
\end{cases}.$$
}
\shortversion{
$w_i\coloneqq 11 \langle [n] \rangle$ if $i=1$ and $w_i\coloneqq ii u_i w_{i-1} ii u_i^R$ if $i \neq 1$. 
}
We show $G(L_{\text{detP}},w_n)$ analogous to the proof of \autoref{thm:palindromes-get-all}. \qed
\end{proof}

By \cite[Thm. 28]{FleHLN2024a}, the non-empty palindromes represent exactly the extended star graphs in the following sense:
$\cG_{\star}=\{G(L_{\classical},w)\mid w\text{ is a palindrome}\,\}$. This raises the question if we can combine
the classical word-representability with our setting in the sense of intersecting $L_{\mathcal{P}}$ and $L_{\classical}$
to obtain the class of extended star graphs. The following theorem shows that in fact, we have $\cG_{\star}\subsetneq \cG_{L_{\mathcal{P}}\cap L_{\classical}}$, since the latter one is the class of bipartite graphs.

\begin{thmrep}\applabel{thm:alternating_palindromes}
%\shortversion{$(*)$}
    $\cG_{L_{\mathcal{P}}\cap L_{\classical}}$ is the class of bipartite graphs. 
\end{thmrep} 
\begin{proof}
Let $G=(V,E) \in \cG_{L_{\mathcal{P}}\cap L_{\classical}}$. Clearly, $L_{\mathcal{P}} \cap L_{\classical} = 1(01)^\ast \cup (01)^\ast 0 \cup \{ \emptyword \}$. Consider $\{ u, v \} \in E$. Obviously, $h_{\{u,v\}}(w) \in v(uv)^+ \cup (uv)^+ u$ because $h_{u,v}(w) \in L_{\mathcal{P}} \cap L_{\classical}$. If $h_{\{u,v\}}(w) \in v(uv)^+$, $|w|_v = |w|_u +1$ and if $h_{\{u,v\}}(w) \in u(vu)^+$, $|w|_u = |w|_v +1$. Hence, $E \subseteq \{ \{u, v \} \mid \exists \ell \in \mathbb{N}_{\geq 1}: |w|_v=\ell \wedge |w|_u=\ell+1\}$. Define $A\coloneqq\{v\in V\mid \exists \ell\in \mathbb{N}_{\geq 1}: |w|_v=2\ell\}$ and $B\coloneqq\{v\in V\mid \exists \ell\in \mathbb{N}_{\geq 1}: |w|_v=2\ell-1\}$. There is no edge between any two vertices of $A$ (or $B$), that is, $G$ is bipartite. 

    Let $G=(V,E)$ be a bipartite graph with the classes $A=\{a_1,\dots,a_s\}$ and $B=\{b_1,\dots,b_t\}$. For each $i\in [s]$, define $v_i\coloneqq a_1\cdots a_{i-1}a_{i+1}\cdots a_s$, $x_i$ as the word which enumerates $N(a_i)$ in linear order of indices of $B$, and $y_i$ as the word which enumerates $B\setminus N(a_i)$ in linear order of the indices of $B$, giving $u_i\coloneqq x_ia_iy_i$; define that $v_0=a_1\cdots a_s$, then,
    \longversion{$$w\coloneqq v_0u_1v_1\cdots u_sv_s\,.$$}\shortversion{$w\coloneqq v_0u_1v_1\cdots u_sv_s$.}
    
    Define $G'=(V,E')\coloneqq G(L_{\mathcal{P}}\cap L_{\classical},w)$. Clearly, for\longversion{ all} $j,k\in [s]$ with $j<k$, $h_{a_j,a_k}(u_iv_i)=01$ if $i\in [s]\setminus\{j,k\}$, $h_{a_j,a_k}(u_jv_j)=01$ and $h_{a_j,a_k}(u_kv_k)=10$, so totally, $h_{a_j,a_k}(w)=(01)^k10(01)^{s-k}\notin L_{\mathcal{P}}\cap L_{\classical}$, so there is no edge between vertices of $A$. For\longversion{ all} $j,k\in [t]$ with $j\neq k$, $h_{b_j,b_k}(w) \in (0\shuffle 1)^s$. Hence, $h_{b_j,b_k}(w) \notin L_{\mathcal{P}}\cap L_{\classical}$ because it has even length, so there is also no edge between vertices of~$B$.
    First consider $\{a_j,b_k\}\notin E$. Then $h_{a_j,b_k}(v_{j-1}u_j)=001$ is an infix of $h_{a_j,b_k}(w)$. Hence, $h_{a_j,b_k}(w)\notin L_{\mathcal{P}}\cap L_{\classical}$, i.e., $\{a_j,b_k\}\notin E'$.
    Secondly, consider $\{a_j,b_k\}\in E$. For $i\in [s]\setminus \{j\}$, $h_{a_j,b_k}(u_iv_i)=10$. Also $h_{a_j,b_k}(u_jv_j)=10$. Hence, $h_{a_j,b_k}(w)=0(10)^s\in L_{\mathcal{P}}\cap L_{\classical}$, so that $\{a_j,b_k\}\in E'$. Thus, $G=G'$. \qed
\end{proof}

\begin{comment} 
%This proof uses \autoref{lem:binary-palindromes}. We don't need it anymore.
\begin{proof}
 Let $G\in \cG_{\star}$.  By \cite{FleHLN2024}, there exists a palindrome~$w$ such that $G=G(L_{\classical},w)$. By \autoref{lem:binary-palindromes}, also $G=G(L_{\mathcal{P}}\cap L_{\classical},w)$. This proves the inclusion. To understand why this inclusion is proper, observe that $\cG_{\star}$ is hereditary but it is not closed under adding some twins (to the star center). Hence, properness follows with \autoref{thm:twins}. As a concrete example, take $w=repabper$ with $K_{2,3}=G(L_{\mathcal{P}}\cap L_{\classical},w)$, but clearly, this is not a star.
\end{proof}
\end{comment}

Since \autoref{thm:alternating_palindromes} shows that intersecting the palindrome language with the language for the classical word-representability may lead to interesting (even on first sight unexpected) results, we finish the paragraph about palindromes with two other combinations of languages.

\begin{correp}\applabel{cor:palindromes-yield-split-and-bipartite}
1.  $\mathcal{G}_{(L_\cP\cap L_{\classical})\cup ((0^2)^+\shuffle (1^2)^+)}$ is the class of split graphs. \\
2. $\mathcal{G}_{\overline{L_{\cP}}\cup \overline{L_{\classical}}}$ is the class of cobipartite graphs.
\end{correp}

\begin{proof}
By \autoref{thm:alternating_palindromes}, we get:\begin{enumerate}
\item $(L_\cP\cap L_{\classical})\cup ((0^2)^+\shuffle (1^2)^+)$ turns one partition side of a bipartite graph represented by $(L_\cP\cap L_{\classical})$ into a clique, so that we represent all split graphs this way.
    \item Using \autoref{lem:compl}, we see that $\overline{L_{\cP}}\cup \overline{L_{\classical}}$ represents all cobipartite graphs.   \qed
\end{enumerate}\renewcommand{\qed}{}
\end{proof}

\medskip

\noindent
\textbf{Copy-Language.}
Let $L_{\mathcal{C}}=\{ww \mid w\in \{0,1\}^{\ast}\}$ denote the binary copy-language which is $0$-$1$-symmetric. 
Apart from the fact that the language of palindromes is context-free, while the copy-language is not, both languages enjoy quite similar properties. We exemplify this by an analogue to \autoref{lem:binary-palindromes}. We follow a presentation similar to the one on palindromes, starting with a morphic characterization.
\begin{lemrep}\applabel{lem:binary-copies} 
For every alphabet $\Sigma$ such that $|\Sigma|\geq 2$, a word $w\in\Sigma^{\ast}$ is a copy-word \iffl, for all $\ta,\tb\in\Sigma$ with $\ta\neq\tb$, $h_{\ta,\tb}(w)$ is a copy-word.
\end{lemrep}
\begin{proof}
Consider a copy-word $ww$ over $\Sigma$. For each $\ta \in \Sigma$ and $\tb \in \Sigma \setminus \{ \ta \}$, $h_{\ta, \tb}(ww) = h_{\ta, \tb}(w) h_{\ta, \tb}(w)$ is a copy-word. Now consider a word $v \in \Sigma^\ast$ such that $h_{\ta, \tb}(v)$ is a copy-word for every $\ta \in \Sigma$ and $\tb \in \Sigma \setminus \{ \ta \}$. $|h_{\ta, \tb}(v)|_0$ and $|h_{\ta, \tb}(v)|_1$ are even for every $\ta \in \Sigma$ and $\tb \in \Sigma \setminus \{ \ta \}$. Hence, $|v|_{\ta}$ is even for every $\ta \in \Sigma$ and $v$ has even length. 
Let $\ta$ be the first letter of $v$. Hence, $u,u' \in \Sigma^\ast$ exist such that $v = \ta u \ta u'$ and $|u|_{\ta} = |u'|_{\ta}$. For every $\tb \in \Sigma \setminus \{ \ta \}$, $|u|_{\tb} = |u'|_{\tb}$ because, otherwise, $h_{\ta,\tb}(\ta u \ta u')$ would not be a copy-word. Hence $|u|_{\tc} = |u'|_{\tc}$ for every $\tc \in \Sigma$. Assume $v = x \td y x \td' y'$ for $\td, \td' \in \Sigma$ and $x, y, y' \in \Sigma^\ast$ such that $|y|_{\tc} = |y'|_{\tc}$ for every $\tc \in \Sigma$. Assume $\td \neq \td'$. 
$h_{\td, \td'}(v) = h_{\td, \td'}(x) \cdot 0 \cdot h_{\td, \td'}(y) \cdot h_{\td, \td'}(x) \cdot 1 \cdot h_{\td, \td'}(y')$ with $|y|_{\td} = |y'|_{\td}$ and $|y|_{\td'} = |y'|_{\td'}$. Since $h_{\td, \td'}(v)$ is a copy-word, $\td = \td'$. This is a contradiction. Hence, $\td = \td'$ and $v = x \td y x \td y'$ with $|y|_{\tc} = |y'|_{\tc}$ for every $\tc \in \Sigma$. By inductive reasoning, we get that $v$ is a copy-word. \qed
\end{proof}

Similarly to the palindromes, we can infer from \autoref{lem:binary-copies}
that we again obtain the complete graphs and in general that $\L_{\mathcal{C}}$ represents all words.

\begin{corollary}\label{cor:copy-language-clique}
A word~$w$ is a copy-word \iffl $G(L_\mathcal{C},w)\simeq K_{|\alph(w)|}$.
\end{corollary}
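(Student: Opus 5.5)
This is a direct consequence of \autoref{lem:binary-copies}, and I would prove the two directions separately.

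For the forward direction, let $w$ be a copy-word, say $w=xx$. For any two distinct letters $u,v\in\alph(w)$ we have $h_{u,v}(w)=h_{u,v}(x)h_{u,v}(x)\in L_{\mathcal{C}}$, since $h_{u,v}$ is a morphism. Hence every pair of vertices is adjacent in $G(L_\mathcal{C},w)$, which is therefore the complete graph on $\alph(w)$, i.e., $K_{|\alph(w)|}$.

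For the converse, assume $G(L_\mathcal{C},w)\simeq K_{|\alph(w)|}$. A graph isomorphic to a complete graph on the same number of vertices is itself complete, so $\{u,v\}\in E$ for all distinct $u,v\in\alph(w)$. By the definition of $L$-representation, this means $h_{u,v}(w)\in L_{\mathcal{C}}$ for all such pairs.
\begin{itemize}
\item If $|\alph(w)|\geq 2$, I would apply \autoref{lem:binary-copies} with $\Sigma=\alph(w)$ to conclude that $w$ is a copy-word.
\item The main obstacle is the degenerate case $|\alph(w)|\le 1$, which \autoref{lem:binary-copies} excludes. Here the graph is $K_1$ (or empty) for every $w$, so the equivalence only holds if we either restrict to $|\alph(w)|\geq 2$ or adopt the convention of that lemma; for example, $w=\ta$ gives $K_1$ but is not a copy-word. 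I would state this restriction explicitly, mirroring the hypothesis $|\Sigma|\geq 2$ in \autoref{lem:binary-copies}, and note that the palindrome analogue, \autoref{cor:palindromes-and-complete-graphs}, has the same edge case.
\end{itemize}
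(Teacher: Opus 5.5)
Your argument is correct and follows the paper's route: the corollary is read off from \autoref{lem:binary-copies}, with the morphism property giving the forward direction and the lemma giving the converse. Your caveat is a genuine point the paper leaves unaddressed: for $w=\ta$ (or any odd-length unary word) $G(L_\mathcal{C},w)$ is $K_1$ although $w$ is not a copy-word, so the statement needs $|\alph(w)|\ge 2$. One correction to your side remark: \autoref{cor:palindromes-and-complete-graphs} does not share this edge case, because every unary word is a palindrome.
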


\begin{theorem}
\label{thm:copy-words-get-all}
The binary copy-language  $L_{\mathcal{C}}$  represent every graph. 
\end{theorem}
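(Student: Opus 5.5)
The plan is to imitate the inductive construction from the proof of \autoref{thm:palindromes-get-all}. Here the right invariant is a split of the representing word into two halves that always have the same letter counts. Let $G=([n],E)$ be a graph. For $i\in[n]$, let $u_i\in[i-1]^*$ be the word that lists $\overline{N(i)}\cap[i-1]$ in increasing order. I will build words $w_i=Y_iZ_i$ over $[i]$ as follows:
\[
Y_1\coloneqq 1,\quad Z_1\coloneqq 1,\qquad Y_i\coloneqq i\,u_i\,Y_{i-1},\quad Z_i\coloneqq u_i\,i\,Z_{i-1}\quad (i\geq 2).
\]
So $w_i=i\,u_i\,Y_{i-1}\,u_i\,i\,Z_{i-1}$, and the candidate representation is $w_n$.

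\textbf{Invariant.} By induction on $i$, the following holds:
\begin{itemize}
\item[(a)] $|Y_i|_c=|Z_i|_c$ for every letter $c\in[i]$;
\item[(b)] for all distinct $j,k\in[i]$, we have $\{j,k\}\in E$ \iffl $h_{j,k}(Y_i)=h_{j,k}(Z_i)$;
\item[(c)] $h_{j,k}(w_i)\in L_{\mathcal{C}}$ \iffl $h_{j,k}(Y_i)=h_{j,k}(Z_i)$.
\end{itemize}
Part (a) is immediate, since $i u_i$ and $u_i i$ contain the same letters. Part (c) follows from (a). Because of (a), $|h_{j,k}(Y_i)|=|h_{j,k}(Z_i)|$, so $h_{j,k}(w_i)=h_{j,k}(Y_i)h_{j,k}(Z_i)$ is split exactly in the middle. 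A word $xy$ with $|x|=|y|$ is a copy-word \iffl $x=y$.

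\textbf{Key step: verifying (b).} There are two cases.
\begin{itemize}
\item \emph{Both $j,k<i$.} Then $h_{j,k}(Y_i)=h_{j,k}(u_i)h_{j,k}(Y_{i-1})$ and $h_{j,k}(Z_i)=h_{j,k}(u_i)h_{j,k}(Z_{i-1})$. These are equal \iffl $h_{j,k}(Y_{i-1})=h_{j,k}(Z_{i-1})$, since the common prefix $h_{j,k}(u_i)$ can be cancelled. By induction this happens \iffl $\{j,k\}\in E$.
\item \emph{$j=i$ and $k<i$.} Vertex $i$ occurs exactly once in each half, and $h_{i,k}(Y_{i-1})=h_{i,k}(Z_{i-1})=1^m$ for $m=|Y_{i-1}|_k$.
\begin{itemize}
\item If $\{i,k\}\in E$, then $k$ does not occur in $u_i$. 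Both halves project to $0\,1^m$, so they are equal.
\item If $\{i,k\}\notin E$, then $k$ occurs exactly once in $u_i$. The halves project to $0\,1\,1^m$ and $1\,0\,1^m$ respectively, which differ.
\end{itemize}
\end{itemize}
Hence $G(L_{\mathcal{C}},w_n)=G$.

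\textbf{Main obstacle.} The delicate point is the placement of $i$ relative to $u_i$. The naive choice $w_i=u_iY_{i-1}i\,u_iZ_{i-1}i$ fails: for a non-neighbour $k$, the projection $1^{m+1}0\,1^{m+1}0$ is a copy-word, so a non-edge would be represented as an edge. Putting $i$ before $u_i$ in the first half and after it in the second half is what breaks the symmetry for non-neighbours. The cancellation argument for pairs $j,k<i$ still works because $u_i$ is a common prefix of both halves. A quick sanity check on small graphs such as $C_4$ would confirm the construction. The resulting word has length $2\bigl(n+|\overline{E}|\bigr)$, which is comparable to the palindrome construction.
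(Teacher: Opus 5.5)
Your proof is correct and uses essentially the paper's construction. The paper writes the word directly as $u_1 1 u_2 2\cdots u_n n\,1u_1 2u_2\cdots n u_n$: two halves with identical letter counts, where vertex $i$ sits after its list $u_i$ of smaller non-neighbours in one half and before it in the other. You arrange the same blocks in decreasing order and verify the construction by induction in the style of the palindrome proof, rather than computing $h_{i,j}(w)$ directly; your choice $u_i\in[i-1]^*$ is also slightly cleaner than the paper's $\overline{N(i)}\cap[i]$.
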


\begin{proof}
    Let $G=([n],E)$ be a graph. Define $w = u_1 1 u_2 2 \cdots u_n n 1 u_1 \cdots n u_n$ where $u_i$ enumerates each vertex of $\overline{N(i)}\cap [i]$ exactly once in the natural order for $i\in [n]$. Consider $G'=([n],E')=G(L_{\mathcal{C}},w)$. 
    Let $i,j\in [n]$ with $i<j$. By definition, $h_{i,j}(u_i)= \emptyword$. Also, $h_{i,j}(u_{j})\in \{\emptyword,0\}$ with $h_{i,j}(u_{j})= \emptyword$ being equivalent to $i\in N(j)$. For $k<i$, $h_{i,j}(u_k)= \emptyword$. 
    Hence, $h_{i,j}(w)= w_1w_2$ with 
    \begin{eqnarray*}w_1&\coloneqq&h_{i,j}(u_i)0h_{i,j}(u_{i+1})\cdots h_{i,j}(u_{j})1h_{i,j}(u_{j+1})\cdots h_{i,j}(u_{n})\text{ and}\\w_2 &\coloneqq& 0h_{i,j}(u_i)h_{i,j}(u_{i+1})\cdots 1 h_{i,j}(u_{j})h_{i,j}(u_{j+1})\cdots h_{i,j}(u_{n})\,.\end{eqnarray*}
    It is easy to see that $\vert w_1 \vert = \vert w_2 \vert$. Furthermore, $w_1=w_2$ holds \iffl $1 h_{i,j}(u_{j}) = h_{i,j}(u_{j})1$. Since $h_{i,j}(u_{j})\in \{\emptyword,0\}$, $1 h_{i,j}(u_{j}) = h_{i,j}(u_{j})1$ \iffl $h_{i,j}(u_{j})= \emptyword$. As $h_{i,j}(u_{j})= \emptyword$ \iffl $i\in N(j)$, we can conclude $E=E'$ and $G=G'$.\qed
\end{proof}

Estimating the complexity of a language by the Chomsky hierarchy,  the copy-language is quite complicated. Notice that by \autoref{lem:compl}, we get another characterization of all graphs through complementation, and the complement of the copy-language is a one-counter language, \longversion{i.e., it is in }a subclass of context-free.
\begin{corollary}
The complement of $L_{\mathcal{C}}$ can represent every graph. 
\end{corollary}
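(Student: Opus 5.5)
The corollary follows by combining \autoref{lem:compl} with \autoref{thm:copy-words-get-all}. First I would check that $\overline{L_{\mathcal{C}}}$ is $0$-$1$-symmetric, so that it is an admissible representing language. This holds because $L_{\mathcal{C}}$ is $0$-$1$-symmetric: $\widetilde{ww}=\widetilde{w}\,\widetilde{w}$. Since complementation commutes with the involution $\widetilde{\cdot}$, the complement inherits the symmetry.

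Next, \autoref{lem:compl} gives $\mathcal{G}_{\overline{L_{\mathcal{C}}}}=\text{co-}\mathcal{G}_{L_{\mathcal{C}}}$. By \autoref{thm:copy-words-get-all}, $\mathcal{G}_{L_{\mathcal{C}}}$ is the class of all graphs, and the class of all graphs is closed under complementation. So let $G$ be any graph. Then $\overline{G}$ is $L_{\mathcal{C}}$-represented by some word $w$, for instance the word built in the proof of \autoref{thm:copy-words-get-all} applied to $\overline{G}$. By the observation $G(L,w)=\overline{G(\overline{L},w)}$ preceding \autoref{lem:compl}, we get $G(\overline{L_{\mathcal{C}}},w)=\overline{\overline{G}}=G$. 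Hence every graph is $\overline{L_{\mathcal{C}}}$-representable.

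No step here is a real obstacle. The only point needing care is the explicit use of the identity $G(L,w)=\overline{G(\overline{L},w)}$ on the same word $w$, which is what makes the argument constructive.

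The remark that $\overline{L_{\mathcal{C}}}$ is a one-counter language is not needed for the representability claim. If one wanted to justify it, one would guess a position $i$ where $u[i]\neq u[i+n]$ (or detect odd length), and use the counter to synchronize the offset $n$.
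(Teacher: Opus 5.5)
Your proof is correct and follows the paper's route: the paper obtains the corollary by combining \autoref{lem:compl} with \autoref{thm:copy-words-get-all}, using the same trick of representing $\overline{G}$ and reading the result through $G(L,w)=\overline{G(\overline{L},w)}$; this trick reappears in the proof of \autoref{thm:sparse-encoding}. Your explicit check that $\overline{L_{\mathcal{C}}}$ is $0$-$1$-symmetric is a small addition the paper leaves implicit.
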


Now we intersect $L_\mathcal{C}$ with the the classical word-representation. Moreover,
we generalise this result: copy-words are only a special case of repetitions  $w^i$ for some $i\in\N_{\geq 2}$.

\begin{thmrep}\applabel{thm:copy-cap-classical}
$\mathcal{G}_{L_{\mathcal{C}}\cap L_{\classical}}$ is the set of classically word-representable graphs.
\end{thmrep}

\begin{proof} 
Note that $L_{\mathcal{C}}\cap L_{\classical} = \langle (0101)^* \rangle$. 

Let $G$ be (classical) word-representable. By \cite[Thm. 7]{KitPya2008}, $G$ is $k$-representable for some $k \geq 1$, i.e. there exists a $k$-uniform word $w$ such that $G = (L_{\classical}, w)$. Since every $k$-representable graph is also $(k+1)$-representable\cite[Obs. 4]{KitPya2008}, $G$ is $2\ell$-representable for some $\ell \geq 1$. Hence, there exists a $2\ell$-uniform word $w$ such that $G = (L_{\classical}, w)$. Note that $\langle(0101)^*\rangle \cap (0^{2\ell} \shuffle 1^{2\ell}) = \langle(0101)^\ell\rangle = L_{\classical} \cap (0^{2\ell} \shuffle 1^{2\ell})$. For all $\ta,\tb \in V$, $h_{\ta,\tb}(w) \in 0^{2\ell} \shuffle 1^{2\ell}$. Hence, $h_{\ta,\tb}(w) \in \langle(0101)^*\rangle$ iff $h_{\ta,\tb}(w) \in \langle(0101)^\ell\rangle$ iff $h_{\ta,\tb}(w) \in L_{\classical}$. Therefore, $G = (L_{\mathcal{C}} \cap L_{\classical}, w)$.

For the other direction, let $G \in \langle(0101)^*\rangle$. Hence, there exists a word $w$ such that $G = G(\langle(0101)^*\rangle, w)$. Let $V_{\text{even}} \coloneqq \{ \ta \in V | |w|_\ta \text{ is even} \}$. Note that all $\ta \in V \setminus V_{\text{even}}$ are isolated. Let $v \coloneqq h_{V_{\text{even}}}(w)$. We will prove that $G(L_{\classical}, v) = G(\langle(0101)^*\rangle, v)$. Since the word-representable graphs are closed under adding isolated vertices this shows that $G$ is word-representable. Let $\ta,\tb \in V_{\text{even}}$. There exists $\ell_1, \ell_2 \geq 1$ such that $\ell_1 = |w|_\ta$ and $\ell_2 = |w|_\tb$, i.e. $h_{\ta,\tb}(w) \in 0^{2\ell_1} \shuffle 1^{2\ell_2}$. If $\ell_1 = \ell_2$, we prove $h_{\ta,\tb}(w) \in \langle(0101)^*\rangle$ iff $h_{\ta,\tb}(w) \in L_{\classical}$ analogously to the other direction. If $\ell_1 \neq \ell_2$, $h_{\ta,\tb}(w) \notin \langle(0101)^*\rangle$ and $||w|_\ta - |w|_\tb|| \geq 2$. Therefore, $h_{\ta,\tb}(w) \notin L_{\classical}$. This shows  $G(L_{\classical}, v) = G(\langle(0101)^*\rangle, v)$. \qed
\end{proof}

\begin{lemrep}\applabel{lem:wtothei}
Let $w\in\Sigma^{\ast}$ and $i\in\N_{\geq 1}$. Then $w^i$ and $w$ $L_{\classical}$-represent the same graph \iffl for all $\ta,\tb\in\Sigma$ we have $h_{\{\ta,\tb\}}(w)\in \ta\{\ta,\tb\}^{\ast}\tb\cup\tb\{\ta,\tb\}^{\ast}\ta$. 
\end{lemrep}

\begin{proof}
Suppose for a contradiction that for some $\ta,\tb\in\Sigma$ we have that $h_{\{\ta,\tb\}}(w)$ starts and end with the same letter, w.l.o.g., with~$\ta$, and that $\ta,\tb$ alternate in~$w$. Then, they do not alternate in $h_{\{\ta,\tb\}}(w^2)$. On the other hand, if $\ta,\tb$ do not alternate in~$w$ (no matter whether or not they start and end in the same letter in $h_{\{\ta,\tb\}}(w)$), then they do not alternate in $w^2$, either. \qed
\end{proof}

\begin{proposition}\applabel{prop:rep}
Let $L=\langle 0\{0,1\}^{\ast}1\rangle$. %,1\{0,1\}^{\ast}0\}$ 
If $w$ $L$-represents~$G$, then also $w^i$ $L$-represents $G$ for all $i\in\N_{\geq 1}$.
\end{proposition}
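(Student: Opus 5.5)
The approach is a direct pairwise comparison. By the definition of $L$-representability, it suffices to show that for all distinct $u,v\in\alph(w)$ we have $h_{u,v}(w)\in L$ if and only if $h_{u,v}(w^i)\in L$. Note first that $\alph(w^i)=\alph(w)$, so both words are over the same vertex set. Since $h_{u,v}$ is a morphism, $h_{u,v}(w^i)=h_{u,v}(w)^i$. The claim therefore reduces to a purely binary statement: for every $x\in\{0,1\}^*$ and every $i\ge 1$, we have $x\in L$ if and only if $x^i\in L$, where $L=\langle 0\{0,1\}^*1\rangle = 0\{0,1\}^*1\cup 1\{0,1\}^*0$.

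For the binary statement, observe that $L$ consists exactly of the words of length at least $2$ whose first and last letters differ.

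\emph{Case $x=\emptyword$.} This cannot occur for distinct $u,v\in\alph(w)$, since both letters appear. Even so, $\emptyword^i=\emptyword\notin L$, so the equivalence holds trivially.

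\emph{Case $x\neq\emptyword$.} The word $x^i$ has the same first letter as $x$, namely $x[1]$, and the same last letter, namely $x[|x|]$. If $|x|=1$, then $x\notin L$, and $x^i$ consists of a single repeated letter, so its first and last letters agree and $x^i\notin L$. If $|x|\ge 2$, then $|x^i|\ge 2$ as well, and $x\in L$ if and only if $x[1]\neq x[|x|]$, which holds if and only if $x^i\in L$.

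Combining the cases gives $G(L,w)=G(L,w^i)$. I expect no real obstacle. The only points needing care are the degenerate short words, namely length $0$ and length $1$, and the observation that $L$ is determined solely by the first and last letters, which is also why this works in contrast to \autoref{lem:wtothei} for $L_{\classical}$.
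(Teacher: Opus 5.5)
Your proof is correct, and it is the natural argument for this statement: the identity $h_{u,v}(w^i)=h_{u,v}(w)^i$, combined with the observation that $L$ is exactly the set of words of length at least $2$ whose first and last letters differ (a property that $x^i$, for $i\geq 1$, inherits from $x$), settles it. The paper gives no written proof of this proposition and evidently regards it as immediate, so there is no different route to compare against.
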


We finish this paragraph with an -- in our opinion -- unexpected twist w.r.t. restricting the copy-language $L_{\mathcal{C}}$. Define for all $k\in\N$, the restriction $L_{\mathcal{C},k}\coloneqq\{ww\mid \vert w\vert_0\not \equiv_k \vert w\vert_1 \}$ of $L_{\mathcal{C}}$ where the number of $0$s and $1$ resp. is not allowed to be congruent modulo $k$. This language represents exactly the\longversion{ class of} $k$-colorable graphs. % and this result can be generalized towards \emph{coloring morphisms}. 

 \begin{thmrep}\applabel{thm:LCk}
     $\mathcal{G}_{L_{C,k}}$ is the class of $k$-colorable graphs.
 \end{thmrep}

\begin{proof}
 Let $G=([n],E)$ be a $k$-colorable with the coloring $f:[n]\to [k]$. Define $w = xu_1 1^k u_2 2^k \cdots u_n n^k x1^k u_1 \cdots n^k u_n$ where $u_i$ writes $t^k$ exactly once per $t\in \overline{N(i)}\cap [i]$ times in the natural order  for $i\in [n]$.  $x$ denotes $1^{f(1)}2^{f(2)}\cdots n^{f(n)}$.
    Consider $G'=([n],E')=G(L_{C,k},w)$. 
    Let $i,j\in [n]$ with $i<j$. By definition, $h_{i,j}(u_i)= \emptyword$. Also, $h_{i,j}(u_{j})\in \{\emptyword,0^k\}$ with $h_{i,j}(u_{j})= \emptyword$ being equivalent to $i\in N(j)$. For $k<i$, $h_{i,j}(u_k)= \emptyword$. Hence, $h_{i,j}(w)= w_1w_2$ with  
    \begin{eqnarray*}w_1&\coloneqq&0^{f(i)}1^{f(j)}h_{i,j}(u_i)0^kh_{i,j}(u_{i+1})\cdots h_{i,j}(u_{j})1^kh_{i,j}(u_{j+1})\cdots h_{i,j}(u_{n})\text{ and}\\
    w_2 &\coloneqq& 0^{f(i)}1^{f(j)}0^kh_{i,j}(u_i)h_{i,j}(u_{i+1})\cdots 1^k h_{i,j}(u_{j})h_{i,j}(u_{j+1})\cdots h_{i,j}(u_{n})\,.\end{eqnarray*}
    It is easy to see that $\vert w_1 \vert = \vert w_2 \vert$. Furthermore, $w_1=w_2$ holds \iffl $1^k h_{i,j}(u_{j}) = h_{i,j}(u_{j})1^k$ and $\vert w\vert_0 \not\equiv_k \vert w\vert_0$.
    
    If $f(i)=f(j)$ then 
    \begin{eqnarray*}
        \vert w_1\vert_0 &=& \vert   0^{f(i)}1^{f(j)}h_{i,j}(u_i)0^kh_{i,j}(u_{i+1})\cdots h_{i,j}(u_{j})1^kh_{i,j}(u_{j+1})\cdots h_{i,j}(u_{n})\vert_0\\
        &=& f(i) + \vert h_{i,j}(u_i)\vert_0+ k+ \vert  h_{i,j}(u_{i+1})\vert_0 +\cdots + \vert h_{i,j}(u_{j})\vert_0 + \vert h_{i,j}(u_{j+1})\vert_0\\
        &&+ \cdots  + \vert h_{i,j}(u_{n})\vert_0\\
        &\equiv_k& f(i)= f(j)\\
        &\equiv_k& f(j) +  \vert h_{i,j}(u_{j})\vert_1+ k + \vert h_{i,j}(u_{j+1})\vert_1 + \cdots  + \vert h_{i,j}(u_{n})\vert_1\\
        &=& \vert   0^{f(i)}1^{f(j)}h_{i,j}(u_i)0^kh_{i,j}(u_{i+1})\cdots h_{i,j}(u_{j})1^kh_{i,j}(u_{j+1})\cdots h_{i,j}(u_{n})\vert_1\\
        &=& \vert w_1 \vert_1.
    \end{eqnarray*}  
    Therefore, there is no edge and we can assume $f(i)\neq f(j)$.
    
    Since $h_{i,j}(u_{j})\in \{\emptyword,0^k\}$, $1 h_{i,j}(u_{j}) = h_{i,j}(u_{j})1^k$ \iffl $h_{i,j}(u_{j})= \emptyword$. As $h_{i,j}(u_{j})= \emptyword$ \iffl $i\in N(j)$, we can conclude $E=E'$ and $G=G'$.

    Conversely, let $G=([n],E)\in \mathcal{G}_{L_{C,k}}$. Then there exists  $w\in [n]^*$ with $G=G(L_{C,k},w)$. We can assume $2\mid \vert w\vert_i$ for all $i\in [n]$, as otherwise $h_{i,j}(w),h_{j,i}(w)$ cannot be in a copy language for all $j\in [n]$. Hence $N(i)=\emptyset$. This case can be bypassed by deleting $i$ from $w$ and adding $ii$ at the beginning of $w$. Then for all $j\in [n]$, $h_{i,j}(w)=001^t$ for some $t\in \mathbb{N}$.
    
    Define $f:[n]\to [k], i \to \frac{\vert w\vert_i}{2} \mod k$. Assume there are $i,j\in [n]$ such that $f(i)=f(j)$ but $\{i,j\}\in E$. Thus, there exists a $u\in V^*$ such that $h_{i,j}(w)=uu$. This implies $\vert u\vert_i = \frac{\vert w\vert_i }{2}\equiv_k \frac{\vert w\vert_j }{2} = \vert u\vert_j$. Therefore, $h_{i,j}(w)\notin L_{C,k}$.  \qed
\end{proof}

The construction of the words representing $G=([n],E)$ is similar to the idea of the copy-language: Let $f:V\to \{1,\ldots,k\}$ be the coloring. We use  $x=1^{f(1)}2^{f(2)}\cdots n^{f(n)}$ and $w = xu_1 1^k u_2 2^k \cdots u_n n^k x1^k u_1 \cdots n^k u_n$ where $u_i$ writes $t^k$ exactly once per $t\in \overline{N(i)}\cap [i]$ times in the natural order. Observe that $x$ ensures the coloring. For all $i\in [n]$, $u_i i^k$ and $ i^k u_i$ deletes the edges between $i$ and the elements in $\overline{N(i)}\cap [i]$. \shortversion{This construction can be generalized to graph morphisms.}

\begin{toappendix}
    
Let $G$ and $H$ be two (simple) graphs. A \emph{graph homomorphism} (or \emph{coloring morphism}) $f: V(G) \to V(H)$ from $G$ to the host graph $H$ satisfies the following condition:
If $\{u,v\}\in E(G)$, then $\{f(u),f(v)\}\in E(H)$. As $H$ is simple, all vertices of~$G$ mapped onto the same vertex of~$H$ form an independent set.
For example, if $H=K_k$, then a coloring morphism from $G$ to $K_k$ exists if and only if $G$ is properly $k$-colorable. 
Sometimes, one also allows non-simple host graphs~$H$.
To emphasize the host graph, such coloring morphisms are also called $H$-colorings.
\end{toappendix}

 \begin{thmrep}\applabel{thm:H-coloring}
    For all graphs $H=(U,F)$, there is a language $L(H)$ such that $\mathcal{G}_{L(H)}$ is the set of all graphs $G=(V,E)$ for which there exists a graph homomorphism $f:V\to U$.
 \end{thmrep}

\begin{proof}
    Let $H=(U,F)$ be a graph.
     Without loss of generality, we assume there is $k\in \mathbb{N}$ such  that $U=[k]$. 
    Define $L(H)\coloneqq\{ww\in \{0,1\}^*\mid \{\vert w\vert_0 \bmod{k}, \vert w\vert_1\bmod{k}\}\in F \}$ for all $k\in \mathbb{N}$.
    Let $G=([n],E)$ be a graph with the graph homomorphism $f:[n]\to [k]$. Define $w = xu_1 1^k u_2 2^k \cdots u_n n^k x1^k u_1 \cdots n^k u_n$ where $u_i$ writes $t^k$ exactly once per $t\in \overline{N(i)}\cap [i]$ times in the natural order  for $i\in [n]$.  $x$ denotes $1^{f(1)}2^{f(2)}\cdots n^{f(n)}$.
    Consider $G'=([n],E')=G(L_{C,k},w)$. 
    Let $i,j\in [n]$ with $i<j$. By definition, $h_{i,j}(u_i)= \emptyword$. Also, $h_{i,j}(u_{j})\in \{\emptyword,0^k\}$ with $h_{i,j}(u_{j})= \emptyword$ being equivalent to $i\in N(j)$. For $k<i$, $h_{i,j}(u_k)= \emptyword$. Hence, $h_{i,j}(w)= w_1w_2$ with  
    \begin{eqnarray*}w_1&\coloneqq&0^{f(i)}1^{f(j)}h_{i,j}(u_i)0^kh_{i,j}(u_{i+1})\cdots h_{i,j}(u_{j})1^kh_{i,j}(u_{j+1})\cdots h_{i,j}(u_{n})\text{ and}\\
    w_2 &\coloneqq& 0^{f(i)}1^{f(j)}0^kh_{i,j}(u_i)h_{i,j}(u_{i+1})\cdots 1^k h_{i,j}(u_{j})h_{i,j}(u_{j+1})\cdots h_{i,j}(u_{n})\,.\end{eqnarray*}
    It is easy to see that $\vert w_1 \vert = \vert w_2 \vert$. Furthermore, $w_1=w_2$ holds \iffl $1^k h_{i,j}(u_{j}) = h_{i,j}(u_{j})1^k$.

    Moreover,
    \begin{eqnarray*}
        \vert w_1\vert_0 &=& \vert   0^{f(i)}1^{f(j)}h_{i,j}(u_i)0^kh_{i,j}(u_{i+1})\cdots h_{i,j}(u_{j})1^kh_{i,j}(u_{j+1})\cdots h_{i,j}(u_{n})\vert_0\\
        &=& f(i) + \vert h_{i,j}(u_i)\vert_0+ k+ \vert  h_{i,j}(u_{i+1})\vert_0 +\cdots + \vert h_{i,j}(u_{j})\vert_0\\& +& \vert h_{i,j}(u_{j+1})\vert_0 + \cdots  + \vert h_{i,j}(u_{n})\vert_0\\
        &\equiv_k& f(i)\\
        \vert w_1 \vert_1 &=& \vert   0^{f(i)}1^{f(j)}h_{i,j}(u_i)0^kh_{i,j}(u_{i+1})\cdots h_{i,j}(u_{j})1^kh_{i,j}(u_{j+1})\cdots h_{i,j}(u_{n})\vert_1\\        
        &=& f(j) +  \vert h_{i,j}(u_{j})\vert_1+ k + \vert h_{i,j}(u_{j+1})\vert_1 + \cdots  + \vert h_{i,j}(u_{n})\vert_1 \\
        &\equiv_k& f(j)
    \end{eqnarray*}  
    Therefore, there is no edge and we can assume $\{f(i), f(j)\}\in F$.
    
    Since $h_{i,j}(u_{j})\in \{\emptyword,0^k\}$, $1 h_{i,j}(u_{j}) = h_{i,j}(u_{j})1^k$ \iffl $h_{i,j}(u_{j})= \emptyword$. As $h_{i,j}(u_{j})= \emptyword$ \iffl $i\in N(j)$, we can conclude $E=E'$ and $G=G'$.

    Conversely, let $G=([n],E)\in \mathcal{G}_{L(H)}$. Then there exists  $w\in [n]^*$ with $G=G(L(H),w)$. We can assume $2\mid \vert w\vert_i$ for all $i\in [n]$, as otherwise $h_{i,j}(w),h_{j,i}(w)$ cannot be in a copy language for all $j\in [n]$. Hence $N(i)=\emptyset$. This case can be bypassed by deleting $i$ from $w$ and adding $ii$ at the beginning of $w$. Then for all $j\in [n]$, $h_{i,j}(w)=001^t$ for some $t\in \mathbb{N}$.
    
    Define $f:[n]\to [k], i \to \frac{\vert w\vert_i}{2} \mod k$. Assume there are $i,j\in [n]$ such that $\{f(i),f(j)\}\notin $F but $\{i,j\}\in E$. Thus, there exists a $u\in V^*$ such that $h_{i,j}(w)=uu$. This implies $\vert u\vert_0 = \frac{\vert w\vert_0 }{2}\equiv_k f(i)$ and $\vert u\vert_1 = \frac{\vert w\vert_1 }{2}\equiv_k f(j)$. Therefore, $h_{i,j}(w)\notin L(H)$.  \qed
\end{proof}

\begin{toappendix}
    These results are interesting because of the following well-known dichotomy result for $H$-colorings due to Hell and Ne\v{s}et\v{r}il, see~\cite{HelNes90}, concerning the question of a $H$-coloring exists.
\begin{itemize}
    \item If $H$ is not bipartite but simple, then this question is \NP-complete.
    \item Otherwise, this question is polynomial-time solvable.
\end{itemize}
By the construction from \autoref{thm:H-coloring}, also carrying over the language $L(H)$, we can conclude:
\begin{corollary}
The recognition problem for $\mathcal{G}_{L(H)}$ is \NP-complete if and only if $H$ is not bipartite but simple.
\end{corollary}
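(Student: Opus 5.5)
The plan is to reduce the recognition problem for $\mathcal{G}_{L(H)}$ to the $H$-coloring problem and then apply the dichotomy of Hell and Ne\v{s}et\v{r}il. By \autoref{thm:H-coloring}, a graph $G$ lies in $\mathcal{G}_{L(H)}$ exactly when there is a graph homomorphism $f\colon V(G)\to V(H)$. So the decision problem ``given $G$, is $G\in\mathcal{G}_{L(H)}$?'' is literally the same problem as ``given $G$, does $G$ admit an $H$-coloring?''. The input is the same, and the identity map on instances serves as a reduction in both directions. Any complexity statement about one problem therefore transfers verbatim to the other.

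First I would verify membership in \NP{} for every fixed $H$. A certificate is a map $f\colon V(G)\to U$, whose size is polynomial in $|V(G)|$ because $H$ is fixed. Checking that every edge of $G$ is mapped to an edge of $H$ takes polynomial time. By \autoref{thm:H-coloring}, such an $f$ exists precisely when $G\in\mathcal{G}_{L(H)}$.

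Second, for \NP-hardness: if $H$ is simple and not bipartite, the Hell--Ne\v{s}et\v{r}il theorem makes $H$-coloring \NP-complete, and the identity reduction carries this hardness over to recognizing $\mathcal{G}_{L(H)}$.

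Third, for the other direction: if $H$ is bipartite, or non-simple (for example, it has a loop), the dichotomy places $H$-coloring in \P. Hence recognizing $\mathcal{G}_{L(H)}$ is polynomial-time solvable, and it is not \NP-complete under the standard assumption $\P\neq\NP$. This assumption is implicit in the ``only if'' direction and I would state it explicitly. The two trivial subcases can be seen directly:
\begin{itemize}
\item If $H$ has a loop, every graph is $H$-colorable, since everything can be mapped onto the looped vertex.
\item If $H$ is bipartite with at least one edge, $H$-colorability is the same as bipartiteness of $G$.
\end{itemize}

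The main obstacle is not the complexity argument but making sure \autoref{thm:H-coloring} applies to every host graph allowed here. Its construction identifies $U$ with $[k]$ and reads colors as residues modulo~$k$. I would check that this identification is a bijection, with $k\equiv 0$ standing for the color $k$. If non-simple $H$ is to be covered, I would also check that the definition of $L(H)$, which uses pairs $\{\vert w\vert_0 \bmod k,\vert w\vert_1\bmod k\}\in F$, handles a loop, where this set is a singleton. If the loop case is not covered by the construction, I would argue it separately: then $\mathcal{G}_{L(H)}$ should be the class of all graphs, so recognition is trivial and lies in \P.
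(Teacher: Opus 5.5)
Your proposal is correct and has the same skeleton as the paper's proof. \autoref{thm:H-coloring} identifies $\mathcal{G}_{L(H)}$ with the class of $H$-colorable graphs, so hardness transfers from Hell and Ne\v{s}et\v{r}il through the identity reduction.

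The genuine difference is in the \textsf{NP}-membership argument.
\begin{itemize}
\item The paper guesses a representing word $w$ together with a bijection $\alph(w)\to V$, and checks that $G(L(H),w)\simeq G$. For completeness this needs three things: the polynomial length bound coming from the construction, a polynomial-time test for $h_{u,v}(w)\in L(H)$, and both directions of \autoref{thm:H-coloring}, so that every representable graph is known to have such a short word.
\item You take the homomorphism $f$ itself as the certificate. This is shorter and needs neither a length bound nor a membership test for $L(H)$, because the characterization in the theorem already carries the whole argument.
\end{itemize}
The paper's certificate has the merit of staying inside the representation framework. It would therefore also work for languages that lack such a clean graph-theoretic characterization, as long as representing words are polynomially bounded and membership is decidable in polynomial time.

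You also make explicit two points that the paper leaves implicit.
\begin{itemize}
\item The ``only if'' direction, i.e., that the bipartite and non-simple cases are not \textsf{NP}-complete, holds only under $\textsf{P}\neq\textsf{NP}$.
\item The definition of $L(H)$ is literally correct only after identifying the colour $k$ with residue $0$. Loops must also be encoded as singletons in $F$, and with that encoding the construction does cover them.
\end{itemize}
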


\begin{proof}
The hardness carries over immediately. For membership, observe that the words that describe a graph of order~$n$ are of polynomial length in~$n$ by our construction. Hence, given a graph $G=(V,E)$, we can guess a word~$w$ that should describe this graph, plus a bijection between the set of vertices of $G(L(H),w)$ and $V$. Then, one can easily check if this bijection is a graph isomorphism.
\end{proof}
\end{toappendix}

\medskip

\noindent
\textbf{Lyndon Words.} 
\autoref{prop:rep} establishes that repetitions of a representant are not needed and even though
this word is not necessarily primitive, this insight encourages to look at the Lyndon words. By \cite[Prop.~5]{KitPya2008} we have that if a $k$-uniform word~$w$ represents a graph~$G$ in the classical sense, then every word in the conjugacy class of $w$ also represents~$G$, i.e., by \cite[Thm.~7]{KitPya2008}, for representing any word-representable graph, Lyndon words suffice. 
%Thus, we have a look whether there exists a class of graphs represented by Lyndon words within our framework. 
As we focus on $0$-$1$-symmetric languages and as not both $\ta\tb$ and $\tb\ta$ can be Lyndon for any two different $\ta,\tb\in\Sigma$, we look at the symmetric closure of binary Lyndon words: set $L_{\mathcal{L}}\coloneqq\langle L_{\text{Lyndon},\{0,1\},<}\rangle$ with the ordering $0<1$. Observe that $L_{\mathcal{L}}=L_{\text{Lyndon},\{0,1\},<}\cup L_{\text{Lyndon},\{0,1\}>}$. 
%Moreover, notice that t
The results for the copy-language imply that repetitions cannot occur in projections when dealing with Lyndon words. Therefore, we can, w.l.o.g., focus on the property of being the smallest conjugate in its class during the investigation of Lyndon words.
N\longversion{ote that n}ot all renaming projections $h_{\ta,\tb}(w)$ of a Lyndon word $w$ have to be Lyndon: we have for the Lyndon word $w=\ta\tb\tc\ta$ that \longversion{$h_{\{\ta,\tb\}}(w)=\ta\tb\ta\tb\ta$, i.e.,  }$h_{\ta,\tb}(w)=010$, is not Lyndon. %On the other hand, if all projections are Lyndon words, the word itself is Lyndon.
Hence, we only get\shortversion{:}\longversion{ a weaker projection lemma.}

\begin{lemrep}\applabel{projectionofLyndon}
Let $w\in\Sigma^{\ast}$ such that $h_{\ta,\tb}(w)$ is $<$-Lyndon for all $\ta, \tb\in\Sigma$ with $\ta  \prec \tb$. Then $w$ is $\prec$-Lyndon.
\end{lemrep}

\begin{proof}
Suppose that there exist $u,v\in\Sigma^+$ such that $w=uv$ and $vu\prec uv$. 
If $v$ were not a prefix of $u$, there existed $x\in\Sigma^{\ast}$ and some letters $\ta\prec \tb$ with $u=x\tb$ and $v=x\ta$. Thus, we have $w=x\tb x\ta$ and therefore $h_{\ta,\tb}(x)$ is not Lyndon. If $v$, on the other hand, were a prefix of~$u$, there existed $x_1\in\Sigma^+$ with $u=vx_1$. Thus, we have by our assumption $vvx_1\prec vx_1v$ and therefore $vx_1\prec x_1v$. Inductively, we obtain that $v=\ta$ and $u=\ta x_1\cdots x_k$ for some $k\in\N$. This leads to $w=\ta x_1\cdots x_k\ta$ and therefore $h_{\ta,\tb}(w)$ is not Lyndon for any other\longversion{ letter} $\tb\in\Sigma$. \qed
\end{proof}

\begin{lemrep}\applabel{lem:Lyndon_observation}
Let $\ta,\tb\in\Sigma$ with $\ta \neq \tb$ and $w \in \Sigma^{\ast}$. \\
1.  $h_{\ta,\tb}(w)$ is a $<$-Lyndon word %with respect to $0 \prec 1$ 
    \iffl $h_{\tb,\ta}(w)$ is a $>$-Lyndon word.\\ % with respect to $1 \prec 0$.
2. If $h_{\ta,\tb}(w)$ is a $<$-Lyndon word %with respect to $0 \prec 1$ 
    then $h_{\ta,\tb}(w)$ is no $>$-Lyndon word. % with respect to $1 \prec 0$.
\end{lemrep}

\begin{proof}
    \begin{enumerate}
        \item This statement is obvious as $h_{\ta,\tb}(w)[i]=0$ \iffl $h_{\tb,\ta}(w)[i]=1$ for $i\in [\vert h_{\tb,\ta}(w)\vert]\longversion{=[\vert h_{\ta,\tb}(w)\vert]}$.
        \item If $h_{\ta,\tb}(w)$ is a $<$-Lyndon word, there are $i,j\in [\vert h_{\ta,\tb}(w)\vert]$ such that $h_{\ta,\tb}(w)[i]=0$ and $h_{\ta,\tb}(w)[j]=1$. Especially, $h_{\ta,\tb}(w)[1]=0$. Otherwise, the conjugate starting at $i$ would be $<$-smaller than $h_{\ta,\tb}(w)$. This implies that the  conjugate starting at~$j$ is $>$-smaller than $h_{\ta,\tb}(w)$.\qed
    \end{enumerate}\renewcommand{\qed}{}
\end{proof}

\begin{theorem}
    \label{thm:Lyndon-words-get-all}
    $\cG_{L_\cL}$ includes every graph.
\end{theorem}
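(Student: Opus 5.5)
The plan is to give an explicit representing word for an arbitrary graph $G=([n],E)$. The word starts with long blocks of equal letters, which force every projection to look like $0^N1^N s$ with $s$ controlled. Lyndonness of such a projection can then be read off from two simple features of $s$. By \autoref{lem:Lyndon_observation}, for $i<j$ it suffices to check that $h_{i,j}(w)$ is $<$-Lyndon exactly when $\{i,j\}\in E$. Indeed, $h_{j,i}(w)$ is the complement, so $h_{i,j}(w)\in L_\cL$ iff $h_{i,j}(w)$ is $<$-Lyndon or $>$-Lyndon, and by part 1 of that lemma the second alternative is just the first one for the reversed pair.

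\textbf{Key combinatorial criterion.} Let $N\ge 2$ and let $s\in\{0,1\}^+$ be such that every maximal run of $0$s in $s$ is shorter than $N$, and $s$ ends with $1$. Then $0^N1^Ns$ is $<$-Lyndon. To see this, compare the word with its rotations: every proper rotation either starts with $1$ or starts with $0^k1$ for some $k<N$, hence is strictly larger than the word, which starts with $0^N$. Primitivity follows as well. Conversely, the word is \emph{not} Lyndon in either of two cases. First, if $s$ ends with $0$: moving the last $0$ to the front produces a rotation beginning with $0^{N+1}$. Second, if $s$ contains a run of at least $N+1$ zeros: the rotation starting at that run begins with $0^{N+1}$. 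In both cases we get a smaller conjugate. I will prove this small lemma first.

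\textbf{Construction.} Set $a=n+2$ and $N=2n+3$, so that $a+n<N\le 2a-1$. For $i\in[n]$, let $J_i$ list $N(i)\cap\{i+1,\dots,n\}$ in increasing order. Define
\[ w = 1^N2^N\cdots n^N\; B_1B_2\cdots B_n\; 12\cdots n, \qquad B_i = i^a J_i\, i^a. \]
Fix $i<j$. Then $h_{i,j}(w)=0^N1^N s$, and $s$ ends with $1$ because the final factor $12\cdots n$ puts $j$ last. It remains to analyse the $0$-runs of $s$, i.e., the runs of $i$ not interrupted by $j$.
\begin{itemize}
\item Occurrences of $i$ before $B_i$ are single letters coming from the words $J_k$ with $k<i$, so there are at most $n$ of them.
\item After $B_i$, the letter $i$ occurs only in the final $12\cdots n$. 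This final $i$ is separated from the earlier ones by block $B_j$, which contains $j$.
\item If $\{i,j\}\in E$, then $j$ occurs in $J_i$. Hence every $0$-run has length at most $a+n<N$, and $h_{i,j}(w)$ is Lyndon.
\item If $\{i,j\}\notin E$, then $B_i$ contributes the run $0^{2a}$ with $2a\ge N+1$, so $h_{i,j}(w)$ is not Lyndon.
\end{itemize}
Together with \autoref{lem:Lyndon_observation}, this gives $G(L_\cL,w)=G$.

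\textbf{Main obstacle.} The delicate step is the bookkeeping of runs: making sure that occurrences of $i$ from different blocks cannot merge into a run of length $N$ or more when $\{i,j\}\in E$. This is exactly why $N$ must exceed $a+n$, while $2a$ must still be at least $N+1$. I will handle it by checking, as above, where each occurrence of $i$ can appear relative to the occurrences of $j$ (prefix, blocks $B_k$ with $k<i$, block $B_i$, block $B_j$, final factor). I also need to double-check that the pairs involving the isolated or single-occurrence situations cause no trouble; this is fine here, since every letter occurs at least $N$ times.
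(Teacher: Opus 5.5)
Your construction and run-length criterion are correct, and the strategy is essentially the paper's. In both, a prefix projects to $0^k1^k$. Non-adjacency of $i<j$ is encoded by an over-long $0$-run inside the $i$-th block, and adjacency by an occurrence of $j$ that splits that run. The real difference is how stray occurrences of $i$ from earlier blocks are kept harmless. The paper keeps $k=3$ and interleaves the separators $v_r^2$, so stray zeros stay isolated. You instead take $k=N$ linear in $n$, so the at most $n-1$ stray zeros plus one half-block $i^a$ always stay below $N$. Both give words of length $\cO(n^2)$, and your criterion is a cleaner, self-contained statement than the paper's case analysis.

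One step is not justified as written: the reduction to checking only $h_{i,j}(w)$ for $i<j$. Part~1 of \autoref{lem:Lyndon_observation} only rewrites the second alternative ($h_{i,j}(w)$ being $>$-Lyndon) as ``$h_{j,i}(w)$ is $<$-Lyndon''; it does not exclude it. Part~2 points the wrong way, since the exclusion is needed exactly in the non-edge case, where $h_{i,j}(w)$ is not $<$-Lyndon. For a general word the reduction is false: with $w=ji$, the word $h_{i,j}(w)=10$ is not $<$-Lyndon, yet $\{i,j\}$ is an edge.

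For your word the fix is one line. $h_{j,i}(w)=1^N0^N\widetilde{s}$ starts with $1$ and contains $0$, so its rotation beginning at a $0$ is $<$-smaller, and $h_{j,i}(w)$ is never $<$-Lyndon. This is the paper's remark that $p$ occurs before $q$ in $w$. With it added, your proof is complete.
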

\begin{proof}
    Let $G=([n],E)$ be an graph with $n\in \mathbb{N}$. Define for $i\in [n]$ the words $v_i:=i\cdots n$, $x_i:=j_{i,1} \cdots j_{i,\ell_i}$, $y_i:=k_{i,1} \cdots k_{i,\ell'_i}$ and $u_i:=i^2 x_i i^2 y_i$ where $N(i) \cap [i..n] = \{j_{i,1},\dots, j_{i,\ell_i}\}$, $\overline{N(i)} \cap [i..n] = \{i,k_1,\dots, k_{i,\ell'_i}\}$ and $j_{i,1},\dots, j_{i,\ell_i}$ as well as $k_{i,1},\dots, k_{i,\ell'_i}$ are strictly monotone increasing. 
    We will show $G=G'=(V,E'):=G(L_{\mathcal{L}},w)$\longversion{ with $$w\coloneqq 1^3 2^3\cdots n^3 v_1^2 u_1 v_2^2 u_2\cdots v_n^2 u_n\,. $$}\shortversion{, $w\coloneqq 1^3 2^3\cdots n^3 v_1^2 u_1 v_2^2 u_2\cdots v_n^2 u_n$.} 

    Let $p,q,r\in [n]$ with $p<q$. By \autoref{lem:Lyndon_observation} only $h_{p,q}(w),h_{q,p}(w)$ with respect to\longversion{ the order} $0< 1$ needs to be considered. Since $p$ appears before $q$ in $w$, it is enough to consider $h_{p,q}(w)$. Observe that for $q<r$,  $h_{p,q}(v_r u_r) = \emptyword$. Hence, $h_{p,q}(v_{q+1}u_{q+1} \cdots v_{n}^2u_{n}) = \emptyword$. For $p<r<q$, $h_{p,q}(v_r^2u_r) = 111$ and $h_{p,q}(v_q^2u_q) = 111111$. Thus,  $h_{p,q}(v_{q}u_{q} \cdots v_{n}^2u_{n}) \in 111111(111)^*$.

    Next, assume $r<p<q$. Then $h_{p,q}(v_i)=01$. If $p,q\in N(r)$, $h_{p,q}(x_r)=01$ and $h_{p,q}(y_r)=\emptyword$. Analogously, $h_{p,q}(x_r)=\emptyword$ and $h_{p,q}(y_r)=01$ for $p,q\notin N(r)$. In the case of $p\in N(r)$ and $q \notin N(r)$, $h_{p,q}(x_r)=0$ and $h_{p,q}(y_r)=1$. For $p\notin N(r)$ and $q \in N(r)$, $h_{p,q}(x_r)=1$ and $h_{p,q}(y_r)=0$. Therefore, $h_{p,q}(v_r v_r u_r)\in \{010101,010110\}$ and $h_{p,q}(v_1^2u_1\cdots v_{p-1}^2 u_{p-1})\in \{010101,010110\}^{p-1}$. Clearly, there does not exist an $i \in [6(p-1)]$ with $w[3n+i..3n+i+2]=000$. 
    Finally, consider $r=p$. If $\{p,q\}\in E$ then $h_{p,q}(v_p^2,u_q)=010100100$ and $h_{p,q}(w)\in 000111\{010101,010110\}^{p-1}010100100111111(111)^*.$ 
    Clearly, $i=1$ is the only $i\in \vert h_{p,q}(w)\vert$ such that $h_{p,q}(w)[i..i+2]=000$. Since $h_{p,q}(w)[\vert h_{p,q}(w)\vert]=1$, $h_{p,q}(w)$ is Lyndon and $\{p,q\}\in E'$.

    Assume $\{p,q\}\notin E$. Then $h_{p,q}(v_pp^2 x_p p^2 y_p)=0100001$. As $v_p u_p= v_pp^2 x_p p^2 y_p$ is an infix of~$w$, $0100001$ is an infix of $h_{p,q}(w)$. As $000111$ is a prefix of $h_{p,q}(w)$, $h_{p,q}(w)$ is no Lyndon word (a conjugate with prefix $h_{p,q}(u_p)$ is $<$-smaller). Thus, $\{p,q\}\notin E$. 
    In total, \longversion{$E=E'$ and }$G=G'$. \qed
\end{proof}

We conclude this paragraph with some insights about graph classes when we only look at odd or even length Lyndon words
with a similar result to \autoref{cor:palindromes-yield-split-and-bipartite}.

\begin{thmrep}\applabel{thm:bipartite_Language}
$\mathcal{G}_{\{w\in L_\cL\mid |w| \text{ is odd}\}}$ is the class of bipartite graphs.
\end{thmrep}
\begin{proof}
    Assume $G=(V,E)$ is an $L$-representable graph. Then there exists a $w\in V^*$ such that $G=G(L,w)$. Define $A\coloneqq\{v\in V\mid \exists k\in \N :\: \vert w\vert_v =2k\}$ and $B\coloneqq\{v\in V\mid \exists k\in \N :\: \vert w\vert_v =2k + 1\}$. Clearly, there are no edges $\{a_1,a_2\}\in E \cap \binom{A}{2}$ as $h_{a_1,a_2}(w)$ is even. Analogously, there is no edge $\{b_1,b_2\}\in E \cap \binom{B}{2}$. Hence, $G$ is bipartite.

    Let $G=(V,E)$ be a bipartite graph with the classes $A=\{a_1,\dots,a_s\},B=\{b_1,\dots,b_t\}$. Let $y_i$ be the word that enumerates each element in $N(a_i)$ for $i\in [s]$ exactly once in order of the indices of the elements in $B$. Define \longversion{$$w\coloneqq a_1^3\cdots a_{s}^3 b_1^2 \cdots b_t^2 v_1\cdots v_s$$}\shortversion{$w\coloneqq a_1^3\cdots a_{s}^3 b_1^2 \cdots b_t^2 v_1\cdots v_s$} with $x_i \coloneqq a_i^2 y_i^2 a_i^2$ for $i\in [s]$. Clearly, for all $i\in [s]$ and $v\in V$, $\vert x_i\vert_v\in \{0,2,4\}$. Hence, $\vert w\vert_a$ is odd for $a\in A$ and $\vert w\vert_b$ is even for $b\in B$.
    
    Define $G'=(V,E'):=G(L,w)$.
    Since $L$ only includes words of odd length, there are no edges $\{a_1,a_2\}\in E' \cap \binom{A}{2}$ and $\{b_1,b_2\}\in E' \cap \binom{B}{2}$. 
    Hence, $G'$ is a bipartite graph with the classes $A,B$. Let $i\in [s]$ and $j \in [t]$.\longversion{

   } Assume $\{a_i,b_j\}\notin E$. Then $h_{a_i,b_j}(v_i)=0000$ is an infix and $h_{a_i,b_j}(a_1^3 \cdots a_s^3 b_1^2 \cdots b_t^2)=00011$ is a prefix of $h_{a_i,b_j}(w)$. Hence, $h_{a_i,b_j}(w)$ is no Lyndon word and $\{a_i,b_j\} \notin E'$.\longversion{

   } Assume $\{a_i,b_j\}\in E$. For $k \in [s] \setminus \{ i \}$, $\vert v_k\vert_{a_s} =0$ which implies $h_{a_i,b_j}(v_k)\in \{\emptyword,11\}$. Furthermore, $h_{a_i,b_j}(v_k) = 001100$. In total $h_{a_i,b_j}(w)\in 00011(11)^{\ast}001100(11)^{\ast}$. Therefore, $h_{a_i,b_j}(w)$ is Lyndon and $ \{a_i, b_j\}\in E$. Thus, $G=G'$. \qed
\end{proof}

\begin{corollary}
 Let $L\coloneqq \{w\in L_\cL\mid |w| \text{ is odd}\}$\longversion{ be the set of Lyndon words of odd length}, $L_{\text{odd}}:=\{w\in \{0,1\}^{\ast}\mid \vert w\vert_0,\vert w\vert_1 \text{ are odd}\}$ and $L_{\text{even}}:=\{w\in \{0,1\}^{\ast}\mid \vert w\vert_0,\vert w\vert_1 \text{ are even}\}$. Then  $\cG_{L \cup L_{\text{odd}}}$ is the class of split graphs and  $\cG_{L \cup L_{\text{odd}} \cup L_{\text{even}}}$ is  the class of  cobipartite graphs.
\end{corollary}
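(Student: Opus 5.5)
The plan is to reuse \autoref{thm:bipartite_Language} together with a parity analysis of the letter counts in a representing word. The key observation is the following. For distinct vertices $u,v$ in a word $w$, the projection $h_{u,v}(w)$ has exactly $|w|_u$ zeros and $|w|_v$ ones. So it lies in $L_{\text{odd}}$ (resp.\ $L_{\text{even}}$) \iffl both counts are odd (resp.\ both are even). It can lie in $L$ only if $|w|_u+|w|_v$ is odd, i.e., only if the two counts have different parity. All three languages are $0$-$1$-symmetric ($L$ is, since $L_\cL$ is and complementation preserves length), so the represented graphs are well defined.

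\emph{Necessity.} Let $G=G(L\cup L_{\text{odd}},w)$ and partition $V$ into $A$ (vertices with even count in $w$) and $B$ (vertices with odd count in $w$). For $u,v\in B$, the projection $h_{u,v}(w)\in L_{\text{odd}}$, so $B$ is a clique. For $u,v\in A$, the projection has even length, so it lies neither in $L$ nor in $L_{\text{odd}}$; hence $A$ is independent. Thus $G$ is split, and the edges between $A$ and $B$ may be arbitrary. For $L\cup L_{\text{odd}}\cup L_{\text{even}}$ the same argument applies, except that pairs inside $A$ now lie in $L_{\text{even}}$. Hence both $A$ and $B$ are cliques, and $G$ is cobipartite.

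\emph{Sufficiency.} Let $G$ be split with clique $K$ and independent set $I$ (resp.\ cobipartite with cliques $K$ and $I$). Let $G_0$ be the bipartite graph obtained from $G$ by deleting all edges inside $K$ and inside $I$. Apply the construction from the proof of \autoref{thm:bipartite_Language} to $G_0$, placing $K$ in the role of the class $A$ and $I$ in the role of the class $B$. That construction gives a word $w$ in which every vertex of $K$ has odd count and every vertex of $I$ has even (nonzero) count. By the case analysis of that proof, for $a\in K$ and $b\in I$ we have $h_{a,b}(w)\in L$ \iffl $\{a,b\}\in E(G_0)=E(G)\cap\{\{a,b\}\mid a\in K,b\in I\}$. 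Since these cross projections have odd length, they never lie in $L_{\text{odd}}$ or $L_{\text{even}}$, so the cross edges are represented exactly. Pairs inside $K$ are odd--odd and hence in $L_{\text{odd}}$, giving a clique. Pairs inside $I$ are even--even, so they are non-edges for $L\cup L_{\text{odd}}$ and edges for $L\cup L_{\text{odd}}\cup L_{\text{even}}$, exactly as required.

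The main point needing care is that the cross-pair analysis in \autoref{thm:bipartite_Language} depends only on the projections $h_{a,b}(w)$ with $a\in A$, $b\in B$, and on nothing else about the graph. Enlarging the language by $L_{\text{odd}}$ and $L_{\text{even}}$ therefore cannot alter cross pairs. This holds because those added languages contain no words of odd length, and every cross projection has odd length. Once this is verified, both statements follow.
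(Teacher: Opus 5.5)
Your argument is the one the paper intends. The paper gives no separate proof and only points to the analogy with \autoref{cor:palindromes-yield-split-and-bipartite}: adding $L_{\text{odd}}$ (resp.\ $L_{\text{even}}$) turns the odd-count (resp.\ even-count) side of the bipartite representation into a clique. Several of your points are correct and more explicit than the paper. Necessity via the parity classes works. Cross projections have odd length while every word in $L_{\text{odd}}\cup L_{\text{even}}$ has even length, so the added languages cannot affect cross pairs. You also rightly see that you need the parity property of the construction, not just the statement of \autoref{thm:bipartite_Language}, because the parity classes must coincide with $K$ and $I$.

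The step that fails as written is ``by the case analysis of that proof''. That case analysis is itself incorrect in one case, namely when $a_i$ is the last neighbour of $b_j$ among $a_1,\dots,a_s$. Then $h_{a_i,b_j}(w)=00011(11)^m001100$ for some $m\ge 0$, which ends in $00$. Its rotation starting at this final block begins with $00000$ and is therefore $<$-smaller. The word also starts with $0$ and contains a $1$, so it is not $>$-Lyndon, and hence it is not in $L_\cL$. Already a single edge $\{a,b\}$ gives $w=a^3b^2a^2b^2a^2$ and $h_{a,b}(w)=00011001100\notin L_\cL$. So your sufficiency direction, as stated, loses exactly these cross edges.

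The repair is easy: append $b_1^2\cdots b_t^2$ at the end of $w$. This keeps all counts of $K$-vertices odd and all counts of $I$-vertices even and nonzero, and every cross projection now ends in $1$. In the edge case, every proper rotation starts with $1$ or with $0^r1$ for some $r\le 2$. Each such rotation is larger than the word itself, which starts with $000$, so the word is Lyndon. In the non-edge case, the block $0000$ still produces a $<$-smaller rotation. With this patched construction, which also repairs the proof of \autoref{thm:bipartite_Language}, the rest of your proof goes through unchanged.
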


\begin{toappendix}
The classes of split and cobipartite graphs are uncomparable with the classical word-presentable graphs; this is studied in details in \cite{SriHar2025a,SriHar2025b}.
\end{toappendix}

\medskip

\noindent
\textbf{Dyck-language.}\label{subsec:Dyck}
Similar to the Lyndon words, we need here to define the symmetric closure of the Dyck-language as $L_{\mathcal{D}} := \langle 
\mathcal{D}\rangle$. Also similar, we have $h_{(,)}(w)\in L_{\mathcal{D}}$ for any pair of parentheses $(,)\in\Sigma$
if $w\in D_k^{\prime\ast}$, but the converse is not true, as $w=([)]$ shows. Our main result is that $\cG_{L_{\mathcal{D}}}$ is the class  of comparability graphs. To see that each graph in $\mathcal{G}_{L_\mathcal{D}}$ is a comparability graph, consider the order $\prec_V$ such that, for all $v,u\in V$, $v \prec_V u$ \iffl $v\neq u$, $\vert w\vert_v= \vert w\vert_u$ and, for all prefixes $w'$ of $w$, $\vert w'\vert_v\leq \vert w'\vert_u$. For the other direction, let $G=(V,E)$ be a comparability graph with the order $\prec$ on $V$. For each $v\in V$, $z_v\coloneqq y_v v x_v$ where $x_v$ enumerates all $u\in V$ with $ v\prec u$, and $y_v$ the remaining vertices. For all $v,u\in V$, if $ v\prec u$ then $h_{v,u}(z_v)=01$. The final word enumerates first all vertices and then all $z_v$ in the order of~$\prec$.

\begin{thm} \label{thm:dyck_comparability}
    $\cG_{L_{\mathcal{D}}}$ is the class  of comparability graphs. 
\end{thm}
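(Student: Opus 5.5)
We use the orientation from the sketch preceding the statement, in which $\{u,v\}\in E$ exactly when $u,v$ are comparable. This is the standard meaning of ``comparability graph''; the preliminaries phrase the condition with ``incomparable'', but the sketch and the construction clearly intend comparability. We prove both inclusions. To fix conventions, call the letter $0$ the \emph{opener} and $1$ the \emph{closer}. Then $h_{v,u}(w)\in\mathcal{D}$ means $|w|_v=|w|_u$ and every prefix of $w$ contains at least as many $v$'s as $u$'s. (If the prefix inequality is read the other way round, all arguments below go through with the roles swapped.)

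\emph{Every graph in $\cG_{L_{\mathcal D}}$ is a comparability graph.} Let $G=G(L_{\mathcal D},w)$ with $V=\alph(w)$. Define $v\prec_V u$ iff $v\neq u$ and $h_{v,u}(w)\in\mathcal{D}$. Since $h_{u,v}(w)=\widetilde{h_{v,u}(w)}$, we get $\{u,v\}\in E$ iff $h_{v,u}(w)\in\mathcal D$ or $h_{u,v}(w)\in\mathcal D$, i.e.\ iff $v\prec_V u$ or $u\prec_V v$. It remains to show that $\prec_V$ is a strict partial order. Irreflexivity holds by definition. For antisymmetry, suppose $v\prec_V u$ and $u\prec_V v$. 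Then every prefix of $w$ would contain equally many $u$'s and $v$'s. This fails at the first occurrence of $u$ or $v$ in $w$, which exists because both letters occur in $w$. Transitivity is immediate: the count equalities $|w|_v=|w|_u=|w|_x$ and the prefix inequalities both chain.

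\emph{Every comparability graph lies in $\cG_{L_{\mathcal D}}$.} Let $\prec$ be a strict partial order on $V$ whose comparability graph is $G$, and fix a linear extension $v_1,\dots,v_n$ of $\prec$. All enumerations below follow this linear extension. For $v\in V$ put $z_v\coloneqq y_v\,v\,x_v$, where $x_v$ enumerates $\{u\mid v\prec u\}$ and $y_v$ enumerates the remaining vertices other than $v$. Let $w\coloneqq v_1\cdots v_n\,z_{v_1}\cdots z_{v_n}$. Every $z_c$ contains every vertex exactly once. Hence for $a$ before $b$ in the linear extension, $h_{a,b}(w)=01\,\beta_1\cdots\beta_n$ with each block $\beta_c\in\{01,10\}$. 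Such a word lies in $\mathcal D$ iff all blocks equal $01$. Its complement starts with $1$ and so is never in $\mathcal D$. Consequently $\{a,b\}\in E(G(L_{\mathcal D},w))$ iff every block $\beta_c$ equals $01$.

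\emph{Case $a\prec b$.} The block $\beta_a$ is $01$ because $b$ lies in $x_a$. The block $\beta_b$ is $01$ because $a\in y_b$. For $c\notin\{a,b\}$, if both $a,b$ lie in $x_c$, or both in $y_c$, the enumeration order gives $01$. The configuration $a\in y_c$, $b\in x_c$ also gives $01$. The only bad configuration is $a\in x_c$ and $b\in y_c$. It is excluded by transitivity, since $c\prec a\prec b$ forces $b\in x_c$.

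\emph{Case $a,b$ incomparable.} Then $b\in y_a$, so $b$ precedes $a$ in $z_a$ and $\beta_a=10$. Hence there is no edge.

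The main obstacle is exactly this bookkeeping of the blocks $\beta_c$ for third vertices $c$. It is where transitivity of $\prec$ and the choice of the linear extension as the enumeration order are both essential. A secondary point is checking that the $\prec$-orientation in the definition of $\mathcal D$ matches the chosen order; if it does not, one simply reverses the linear extension.
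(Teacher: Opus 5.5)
Your proof is correct and follows the paper's argument essentially step for step. It uses the same word $z\,z_{v_1}\cdots z_{v_n}$ with $z_v=y_v\,v\,x_v$ along a linear extension, and the same appeal to transitivity to rule out the configuration $a\in x_c$, $b\in y_c$. For the converse it uses the same prefix-count order $\prec_V$; you define it directly as $h_{v,u}(w)\in\mathcal D$, which absorbs the paper's closing prefix-correspondence step. The only detail worth stating explicitly is that $v\neq x$ in your transitivity step, which follows from the asymmetry you proved just before.
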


%is  an enumeration of all vertices concatenated with an enumeration of all $z_v$ in the order of $\prec$.
\begin{proof}
    Let $G=(V,E)$ be a comparability graph w.r.t. the strict order $\prec$ on~$V$, i.e., $\{u,v\}\in E$ \iffl $u,v$ are comparable with respect to~$\prec$. Define the upper set $U_\prec(v)\coloneqq\{u\in V \mid v\prec u\}$ for $v\in V$. $\prec_{lin}$ denotes an arbitrary but fixed linear order on $V$ that extends~$\prec$. Define $z\coloneqq v_1\cdots v_n$ to be an enumeration of $V$ in the order $\prec_{lin}$\longversion{, i.e., $\vert V\vert = n$}.
    For $v\in V$, let $x_v,y_v$ be words such that $x_v$ enumerates $U_\prec(v)$ and $y_v$ enumerates $\overline{U_\prec(v)}\setminus \{v\}$ according to the order~$\prec_{lin}$ (increasingly). Further, denote $z_v\coloneqq y_v v x_v$, i.e., $z_v$ first enumerates the elements of~$V$ that are in the upper set of~$v$, then $v$, and then  the remaining elements of~$V$.
    Define $w\coloneqq z z_{v_1}\cdots  z_{v_n}$ and $G'=(V,E')=G(L_{\mathcal{D}},w)$. Note that for each $v,u\in V$, $\vert z_v\vert_u=\vert z \vert_u = 1$. Thus, for all $v,a,b\in V$, $h_{a,b}(z_v),h_{a,b}(z)\in \{01,10\}$. 
        Let w.l.o.g. $a,b\in V$ with $a\prec_{lin}b$ in our following discussion. Thus, $h_{a,b}(z) = 01$ $(*)$.
    
    Assume $\{a,b\} \in E$. As $G$ is a comparability graph, this implies $a\prec b$ $[*]$, as $b\not\prec a$  because $a\prec_{lin}b$. We show that $h_{a,b}(z_v) =01$ for all $v\in V$. Then, with $(*)$, we can conclude $h_{a,b}(w) =(01)^{n+1}$. Hence, $h_{a,b}(w)\in L_{\mathcal{D}} $. Suppose there existed $v \in V$ such that $h_{a,b}(z_v) =10$. Hence, $b$ precedes $a$ in the enumeration~$z_v$. If $v=a$ (or $v=b$, resp.), then $b\in \alph(y_v)=\overline{U_\prec(v)}\setminus \{v\}$ (or $a\in \alph(x_v)=U_\prec(v)$, resp.), contradicting $[*]$. Therefore, $v\notin \{a,b\}$. Since $x_v$, $y_v$ enumerate in accordance with~$\prec_{lin}$ and as $a\prec b$, $b$ has to appear in $y_v$ and $a$ in $x_v$. This is a contradiction to the transitivity of $\prec$, as $a\prec b$ and $v \prec a$ (because 
    %$\vert x_v \vert =1$
    $a\in U_\prec(v)$) would imply $b\in U_\prec(v)=\alph(x_v)$. Hence, $h_{a,b}(z_v) =01$ for all $v\in V$. 
    Now assume $\{a,b\} \notin E$. Then $\vert y_a\vert_b =1$ and $h_{a,b}(z_a) =10$. Let $a=v_i$. This implies $h_{a,b}(w) \in 01\{01,10\}^{i-1}10\{01,10\}^{n-i}$. Then $\{a,b\}\notin E'$, since $\vert h_{a,b}(w[1..2i+1])\vert_0 = i < i+1 = \vert h_{a,b}(w[1..2i+1])\vert_1$ and $\vert h_{a,b}(w[1])\vert_1 = 0 < 1 = \vert h_{a,b}(w[1])\vert_0$.

    Conversely, let $G=(V,E)\in \cG_{L_{\mathcal{D}}}$. There exists $w\in V^{\ast}$ such that $G=G(L_{\mathcal{D}},w)$. Define the order $\prec_V$ such that, for all $v,u\in V$, $v \prec_V u$ \iffl $v\neq u$, $\vert w\vert_v= \vert w\vert_u$ and, for all prefixes $w'$ of $w$, $\vert w'\vert_v\leq \vert w'\vert_u$. We show that $v \prec_V u$ is a strict partial order and that $\{v,u\}\in E$ \iffl $v\prec_V u$, i.e., $G$ is a comparability graph.
    By definition, $\prec_V$ is irreflexive. Let $a,b,c\in V$ with $a \prec_V b$ and $b \prec_V c$. Thus, $\vert w\vert_a= \vert w\vert_b = \vert w \vert_c $. For any prefix $w'$ of $w$, $\vert w \vert_a \leq \vert w \vert_b \leq \vert w \vert_c$. Let $i\in [\vert w\vert]$ be the the smallest~$i$ with $w[i]=b$. This implies $\vert w[1..i-1] \vert_a \leq 0 =\vert w[1..i-1] \vert_b$ and $\vert w[1..i] \vert_a \leq 0 < 1 =\vert w[1..i] \vert_b \leq \vert w[1..i]\vert_c$. Thus, $a\neq c$ and $a\prec_V c$. Therefore, $\prec_V$ is a strict partial order.

    Let $a,b\in V$. For $\vert w\vert_a\neq  \vert w\vert_b$, $\{a,b\} \notin E$ and $a\not\prec_V b$ as well as $b \not\prec_V a$. Thus, we can assume $\vert w\vert_a =  \vert w\vert_b$.  Since $h_{a,b}$ is a morphism, if $w'$ is a prefix of $w$, then $h_{a,b}(w')$ is a prefix of $h_{a,b}(w)$. Therefore, $\vert w'\vert_a =  \vert h_{a,b}(w')\vert_0$ and $\vert w'\vert_b =  \vert h_{a,b}(w')\vert_1$. 
    By definition,  $a \not \prec_V b$ and $b \not \prec_V a$ \iffl there are prefixes $w',w''$ such that $ \vert h_{a,b}(w')\vert_1 = \vert w'\vert_b < \vert w'\vert_a =  \vert h_{a,b}(w')\vert_0$ and $ \vert h_{a,b}(w'')\vert_0 = \vert w''\vert_a < \vert w''\vert_b  = \vert h_{a,b}(w'')\vert_1$.  Thus, $\{a,b\}\notin E$. 
    Conversely, assume $\{a,b\} \notin E$. Hence, there are prefixes $h',h''$ of $h_{a,b}(w)$  with $\vert h'\vert_1 < \vert h'\vert_0$ and $ \vert h''\vert_0 <  \vert h''\vert_1$. 
    Since $\vert h_{a,b}(v)\vert \leq 1$ for all $v\in V$, we can show by an inductive argument that, for each prefix $h$ of $h_{a,b}(w)$, there exists a prefix $w'$ of $w$ such that $h=h_{a,b}(w')$. Again by definition, this means that $a \not\prec_V b$ and $b \not \prec_V a$. \qed
\end{proof}

There is a rather unexpected connection between the number~$d$ of pairs of parentheses and the order dimension, a famous notion from order theory: Dyck words with~$d$ of pairs of parentheses describe exactly the comparability graphs of partial orders of dimension~$d$, as shown in~\cite{FenFFKS2026}. 

With \autoref{lem:compl}, we get a characterization of co-comparability graphs in terms of complements of the symmetric hull of the Dyck language. We finish this paragraph with the symmetric hull of probably the most famous Dyck language $L_1\coloneqq \langle\{0^n1^n\mid n\in\N\}\rangle$ and the non-regular language $L_2\coloneqq \{w\in\{0,1\}^*\mid |w|_0=|w|_1\}$ where
the Dyck-property is dropped (i.e., we have $2$-uniform words).
\begin{prprep}\applabel{propunioncointerval}
$\cG_{L_1}$ is the class of graph unions of co-interval graphs.\footnote{Co-interval graphs are not closed under graph union\longversion{, as, e.g.,}\shortversion{:} $2K_2$ is not a co-interval graph.}
$\cG_{L_2}$ is the class of cluster graphs.
\end{prprep}

\begin{proof}
Let $G=(V,E)$ be a co-interval graph. This means that we can associate to $V$ a family of intervals $\{I_v\mid v\in V\}$ such that $G$ can be viewed as the comparability graph of the corresponding interval order on~$V$, i.e., $u<v$ \iffl the rightmost endpoint of~$I_u$ is to the left of the leftmost endpoint of~$I_v$. Without loss of generality, we can assume that all endpoints are pairwisely distinct and all are integers from $[2|V|]$. Then, we can associate a word $w\in V^{2|V|}$ to $G$ by setting $w[i]=\ta$ (for $i\in [2|V|]$) \iffl one endpoint of $I_\ta$ is at position~$i$. Now, $G\simeq G(L_1,w)$ is easy to check. If we replace the first occurrence of every $\ta\in V$ by $\ta^k$ in~$w$, we arrive at a word $w_k$ of length $(k+1)|V|$. One can see that  $G\simeq G(L_1,w_k)$ for each positive integer~$k$. 
If we start with a union of $n$ co-interval graphs, yielding a graph~$H$, we can hence find a word $x_i$ of length $(i+1)|V_i|$ to describe component $H_i=(V_i,E_i)$. Define $x\coloneqq x_1\cdots x_n$.
As vertices occurring in $x_i$ do not occur in any other part of~$x$ and as only such vertices have frequentness~$(i+1)$, $H\simeq G(L_1,x)$ follows.

Conversely,  let $w\in \{0,1\}^*$.  Let $V=\alph(w)$. Consider $G_1=G(L_1,w)=(V,E_1)$.  Let $\ta,\tb\in V$ be two vertices and look at $h_{\ta,\tb}(w)$. This binary word belongs to $L_1$ if it is either $0^n1^n$ or $1^n0^n$ for some~$n$. In particular, this means that vertices of different frequentnesses in~$w$ are not connected. 
For each frequentness, however, we can build a separate interval model by defining the interval $I_\ta$ of $\ta$ by the position of the leftmost occurrence of $\ta$ in $w$ as the left endpoint and the position of the rightmost occurrence of $\ta$ as the right endpoint of~$I_\ta$. In this interval model, $\{\ta,\tb\}$ is an edge \iffl the intervals $I_\ta$ and $I_\tb$ are disjoint.
Hence, the whole graph $G_1$ is in fact the graph union of a number of co-interval graphs.

Let $G=(V,E)$ be a cluster graph, i.e., $G=\bigcup_{j=1}^k (V_k,E_k)$, where $G_k=(V_k,E_k)\simeq K_{n_k}$ for some positive integers $k$ and $n_1,\dots,n_k$. Accordingly, let $V_j=\{v_{j,1},\dots,v_{j,n_j}\}$ for $j\in [k]$. Define $w_j\coloneqq v_{j,1}^j\cdots v_{j,n_j}^j$ and $w\coloneqq w_1\cdots w_k$. 
Then, it is clear that $G=G(L_1,w)=G(L_2,w)$.

Conversely, let $w\in \{0,1\}^*$.  Let $V=\alph(w)$. Consider $G_2=(L_2,w)=(V,E_2)$. Let $\ta,\tb\in V$ be two vertices. If $|w|_\ta\neq|w|_\tb$, then $\{a,b\}\notin E_2$. Otherwise, $h_{\ta,\tb}(w)\in L_2$, so  $\{a,b\}\in E_2$. \qed
\end{proof}
By \autoref{lem:compl}, we get a characterization of complete multipartite graphs (as complements of cluster graphs) as $\cG_{\overline{L_2}}$.

\medskip

\noindent
\textbf{Size of the Graph Representations.}
The languages that we just discussed, most of which enable us to describe any graph, are not the first ones observed: for instance, $L_{\overline{1^3}}$ is such a language, as described in \cite{JonKPR2015}, see our discussions above.
However, there is one drawback of such an encoding: the word~$w(G)$ describing a given graph $G=(V,E)$ will have length $\varTheta(|V|!)$, so that we need an exponential number of bits to describe a graph, measured in its order. Even worse is the situation with 1112-avoidance, see \cite[Thm. 3.12]{GaeJi2020}, where the recursively constructed words can have length  $\cO(2^{|V|^2})$.
With 2-11-representable graphs, a more concise representation of every graph was proven in \cite[Thm. 5.1]{CheKKKP2019}, using only  $\cO(n^3\log(n))$ many bits. 
This is still worse than using traditional adjacency lists or adjacency matrices.
In particular, the latter data structure needs $\cO(n^2)$ bits to describe any (un)directed graph of order~$n$.
The good news is that the graph representations that we obtain with the help of palindromes, copy-words or Lyndon words are better. Taking as an example the cycle $C_4$, palindromes and copy languages performs much better, as illustrated in \autoref{fig:languages-C4}.

\begin{corollary}
With the help of palindromes, copy-words or Lyndon words, each graph of order~$n$
can be represented with $\cO(n^2\log(n))$ many bits. By \autoref{lem:compl}, similar statements are true for  complement languages.
\end{corollary}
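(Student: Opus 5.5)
The plan is to bound the lengths of the representing words built in the constructive proofs of \autoref{thm:palindromes-get-all}, \autoref{thm:copy-words-get-all} and \autoref{thm:Lyndon-words-get-all}. Each is a word over the vertex alphabet $[n]$, and each letter is written in binary with $\lceil\log_2(n+1)\rceil$ bits. So it suffices to show that every such word has length $\cO(n^2)$. Padding letter codes to a fixed width makes the encoding self-delimiting; if one insists, the value of $n$ can be sent as a header of $\cO(\log n)$ bits.

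The length estimates run as follows.
\begin{itemize}
\item \emph{Palindromes.} Here $|w_1|=2$ and $|w_i|=|w_{i-1}|+2+2|u_i|$, where $|u_i|\le i-1$. Summing gives $|w_n|\le 2n+\sum_{i}2(i-1)=\cO(n^2)$.
\item \emph{Copy-words.} We have $|w|=2\sum_i(1+|u_i|)\le 2\sum_i(1+i)=\cO(n^2)$.
\item \emph{Lyndon words.} $w=1^32^3\cdots n^3v_1^2u_1\cdots v_n^2u_n$ with $|v_i|\le n$ and $|u_i|=4+|x_i|+|y_i|\le n+5$. Hence $|w|\le 3n+n(3n+5)=\cO(n^2)$.
\end{itemize}
Multiplying each bound by $\cO(\log n)$ bits per letter gives the claim.

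For the complement languages I will invoke \autoref{lem:compl}. To represent $G$ by $\overline{L}$, take the word that $L$-represents $\overline{G}$, since $G(\overline{L},w)=\overline{G(L,w)}$. Because $\overline{G}$ also has order $n$, the same $\cO(n^2\log n)$ bound applies.

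I do not expect a real obstacle; the only care needed is the bookkeeping of the recursive palindrome construction, i.e.\ not double-counting $u_i$ and $u_i^R$ (both are at most $i-1$). One might also remark that the deterministic variant $L_{\text{detP}}$ of \autoref{thm:ldetp} at most doubles the length plus $\cO(n)$, so the same bound holds there.
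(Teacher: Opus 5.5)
Your proposal is correct and follows the reasoning the paper intends. The paper gives no separate proof: it relies on the words built in \autoref{thm:palindromes-get-all}, \autoref{thm:copy-words-get-all} and \autoref{thm:Lyndon-words-get-all} having $\cO(n^2)$ letters, on its remark that each letter costs $\cO(\log n)$ bits, and on \autoref{lem:compl} (represent $\overline{G}$, which has the same order) for the complement languages. Your explicit length bookkeeping, e.g.\ $|w_n|\le 2n+n(n-1)$ for palindromes, fills in what the paper leaves implicit.
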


Keep in mind that the $\log$-factor has to be added, as writing a single letter of the alphabet~$V$ needs $\cO(\log(|V|))$ many bits.
Notice that this matches other implicit graph representations, e.g., in terms of sum graphs, see~\cite{FerGaj2023}, or also that of adjacency lists\longversion{, a very prominent example of explicit graph representations}.
For sparse graphs, it is also important to take the number of edges into account.
For instance, for adjacency lists, one needs $\cO((n+m)\log n)$ many bits.
Next, we show that a slight modification of our constructions yields a graph representation that matches this bound.

\begin{thmrep}\applabel{thm:sparse-encoding}
There exists a language~$L$ such that any graph~$G=(V,E)$, with $|V|=n$ and $|E|=m$, can be represented as $G(L,w)$ for a word~$w\in V^*$ of length $\cO(n+m)$, so that $w$ can be encoded as a binary string of size $\cO((n+m)\log n)$.
\end{thmrep}

\begin{proof}
Consider $L=\overline{L_{\mathcal{C}}}$. Describe $G=(V,E)$ (of order~$n$) by the word~$w$ that is used to describe the complement of~$G$ according to the proof of \autoref{thm:copy-words-get-all}. By \autoref{lem:compl}, $G$ is isomorphic to $G(L,w)$. Now, notice that~$w$ contains $2n$ `vertex letters' and twice each of the words $u_i$ that can be described as enumerating all vertices of $N_G(i)\cap [i]$ (due to graph complement). Hence, the length of $u_i$ is bounded by the degree of vertex~$i$. By simple double counting, $|u_1\cdots u_n|\in\cO(m)$, where $m=|E|$, so that the claims follow immediately. \qed
\end{proof}

This theorem is also interesting from a language-theoretic perspective.
While the copy-language is not context-free, its complement is even a 1-counter language. Hence, there is no need to go very far in the Chomsky hierarchy to obtain such an efficient graph encoding. On the other hand, we neither have any arguments why regular languages have to lead to essentially bigger graph encodings. We leave this as an open question.

In contrast, it might be interesting to have encodings in hand that are suitable for storing dense graphs, as opposed to sparse graphs as in \autoref{thm:sparse-encoding}.
By the proof of this theorem, the language of copy-words $L_{\mathcal{C}}$ seems to be very suitable. For a concrete example, reconsider \autoref{cor:copy-language-clique}: for a vertex set~$V$ of size~$n$, $ww$ encodes $K_n$ if $w$ enumerates $V$ in some order.

\section{Conclusions and Further Open Questions}

In this paper, we have shown various examples of rather famous languages that can represent large classes of graphs, if not all graphs. For instance, we showed that both Lyndon words and palindromes are suitable to represent all graphs. This motivates the question what graph class Christoffel words (introduced in~\cite{Chr1875}) can represent. Notice that they are special Lyndon words and that for a Christoffel word $w=0v1$, $v$ is a palindrome. 
\begin{toappendix}
(More information on these words can be found in the monography~\cite{BerLRS2008}.)
\end{toappendix}
One of the nice combinatorial properties of classical word representability is that each word-representable graph can be represented by a uniform word. This kind of property does not hold for the language representations that we studied in this paper. However, there are indeed languages 
that can represent all graphs and that allow for uniform word representations. In~\cite{FenFFKS2026}, it is shown that every graph of so-called boxicity~$b$ can be represented by a finite $2b$-uniform language $L_b$, so that their union $L_B\coloneqq\bigcup_{b\geq 1}L_b$ represents all graphs as every graph has some finite boxicity~\cite{Rob69}. However, $L_B$ is no ``nice'' and well-known formal language, so that the question remains to find a ``known'' formal language representing all graphs by uniform words.%\todohfInPlace{I suggest removing the next sentence as we even KNOW that we cannot represent planar graphs.}
%We are also aware that the generalization of word representability introduced in~\cite{KenMal2023} to characterize all planar graphs cannot be modeled by our approach directly. Thus, further investigations are necessary.

\bibliographystyle{splncs04}
\bibliography{ab,hen,unpublished,wrg-addendum}

\appendix
%\todoscs{We need to remove the appendix!!}

\end{document}